\documentclass[12pt,a4paper]{article}

\usepackage{amsfonts}
\usepackage{graphicx}
\usepackage{amsmath}
\usepackage{amsthm}
\usepackage{longtable}
\usepackage{float}
\usepackage{cite}
\usepackage{adjustbox}
\usepackage{tikz}
\usepackage{caption}

\theoremstyle{definition}
\newtheorem{definition}{Definition}
\newtheorem{theorem}{Theorem}
\newtheorem{proposition}{Proposition}

\usetikzlibrary{shapes.geometric,arrows}
\tikzstyle{startstop}=[rectangle,rounded corners,minimum width=3cm,minimum height=1cm,align=left,draw=black]
\tikzstyle{io}=[trapezium,trapezium left angle=70,trapezium right angle=110,minimum width=3cm,minimum height=1cm,align=left,draw=black]
\tikzstyle{process}=[rectangle,minimum width=3cm,minimum height=1cm,align=left,draw=black]
\tikzstyle{decision}=[diamond,minimum width=3cm,minimum height=1cm,align=left,draw=black]
\tikzstyle{arrow}=[thick,->,>=stealth]

\linespread{1.25}

\begin{document}

\title{Special core tensors of multi-qubit states and the concurrency of three lines}

\author{Choong Pak Shen\footnote{pakshenchoong@gmail.com} \and Hishamuddin Zainuddin \and Chan Kar Tim\footnote{Corresponding author: chankt@upm.edu.my} \and Sh. K. Said Husain}
\date{Institute for Mathematical Research, Universiti Putra Malaysia, 43400 Serdang, Selangor, Malaysia.\\
Date: \today}

\maketitle

\begin{abstract}
Classification of multipartite states aims to obtain a set of operationally useful and finite entanglement classes under the action of either local unitary (LU) or stochastic local operation and classical communication (SLOCC). In this work, we propose a computationally simple approach to find these classes by using higher order singular value decomposition (HOSVD) and the concurrency of three lines. Since HOSVD simultaneously diagonalizes the one-body reduced density matrices (RDM) of multipartite states, the core tensor of multipartite states is the pure-state representation of such simultaneously diagonalized one-body RDM. We identified the special core tensors of three and four qubits, which are also genuinely entangled by default. The special core tensors are further categorized into families of states based on their first $n$-mode singular values, $\sigma_1^{(i)2}$. The current proposal is limited to multi-qubit system, but it scales well with large multi-qubit systems and produces a finite number of families of states.
\end{abstract}

\section{Introduction}

Being a quantum resource under the local operation and classical communication (LOCC) paradigm \cite{Wootters1998,Chitambar2019}, numerous efforts have been dedicated to understand entanglement from various perspectives and mathematical tools \cite{Acin2000,Carteret2000,Kus2001,Sudbery2001,Sinolecka2002,Verstraete2002,Vidal2002,Lamata2006,Lamata2007,Plenio2007,Li2009,Chitambar2010,Kraus2010A,Kraus2010B,Sharma2010,Sawicki2011A,Sawicki2011B,Liu2012,Sawicki2012,Sharma2012,Li2013,Sharma2013,Walter2013,Li2014,Schwaiger2015,Sawicki2018}. To date, even though there is no single unified approach to describe multipartite entanglement, most discussions focus around the operational aspects of entanglement in quantum information processing tasks. Since the local unitary (LU) or stochastic local operation and classical communication (SLOCC) entanglement classes of multipartite states are claimed to be infinite \cite{Dur2000,Gharahi2018}, the current challenge in the classification of multipartite states is to find a computationally simple approach that gives operationally meaningful and finite classification results \cite{Gharani2020,Gharani2021}.

Previously \cite{Choong2020}, we showed that higher order singular value decomposition (HOSVD) \cite{Lathauwer2000,Kolda2009} simultaneously diagonalizes the one-body reduced density matrices of three qubits. Furthermore, by finding all the solutions to the all-orthogonality conditions of three qubits, we recovered all the special states of three qubits \cite{Carteret2000}. The first $n$-mode singular values, $\sigma_1^{(n)2}$, where $n=1,\,2,\,3$, can be used to plot a LU entanglement polytope similar to that in \cite{Walter2013}. However, as the number of variables grows exponentially with the increase in the number of subsystems, solving the all-orthogonality conditions of multipartite states is not a feasible approach in generalizing the methodology to multipartite systems.

Before we proceed further, we would like to point out that from our previous results, some special states of three qubits are specific cases to a more generic setting. For example, the bi-separable states $C|AB$ with the first $n$-mode singular values $(\sigma_1^{(1)2},\,\sigma_1^{(2)2},\,\sigma_1^{(3)2}) = (\sigma_1^{(1)2},\,\sigma_1^{(1)2},\,1)$
\begin{align*}
\left| \text{Bi-Sep}_{C|AB} \right\rangle = t_{111} \left| 111 \right\rangle + t_{221} \left| 221 \right\rangle
\end{align*}
and the three-qubit states with $(\sigma_1^{(1)2},\,\sigma_1^{(2)2},\,\sigma_1^{(3)2}) = (\frac{1}{2},\,\frac{1}{2},\,\sigma_1^{(3)2})$ are specific cases to the following Slice states,
\begin{align*}
& \left| \text{S}_1 \right\rangle = t_{111} \left| 111 \right\rangle + t_{112} \left| 112 \right\rangle + t_{221} \left| 221 \right\rangle + t_{222} \left| 222 \right\rangle, \\
& \bar{t}_{111} t_{112} + \bar{t}_{221} t_{222} = 0,
\end{align*}
where $\sigma_1^{(3)2} > \sigma_1^{(1)2} = \sigma_1^{(2)2}$. Therefore, our current work focuses on identifying the generic special states of a multipartite system since the specific cases are inclusive to the generic special states that we identified.

In this work, we propose a computationally simple approach to identify the special states of multi-qubit core tensors. This is important because core tensors are also the pure-state representation of multi-qubit states when their one-body reduced density matrices (RDM) are simultaneously diagonalized. Based on the concurrency of three lines \cite{Pedoe1970}, we convert the problem of finding solutions to the set of all-orthogonality conditions into the problem of satisfying a set of determinants to be zero. This conversion has the added computational advantage in that satisfying the requirements for a set of determinants to be zero is easier than finding the solutions to a set of polynomial equations. To do so, we define a pair of conjugate concurrent variables (CCV) so that the one-to-one correspondence between the algebraic manipulations of a set of simultaneous equations and the geometrical idea based on the concurrency of three lines is preserved. Then, we describe a general algorithm of this approach and demonstrate it with the case of four qubits. Even though our approach is unable to identify the generalized GHZ states, these states have a very recognizable form.

We structure our paper as follows. In Section \ref{SecMatrixUnfoldingAndHOSVD}, we provide the original definitions of matrix unfolding and HOSVD. We show that matrix unfolding is related to the RDM of multipartite states and HOSVD simultaneously diagonalizes the one-body RDM of multipartite states. In Section \ref{SecConcurrencyThreeLines}, we summarize our previous results on three qubits, and show that the derivation from our previous work is equivalent to a geometrical concept in projective geometry, called the concurrency of three lines. By solving the concurrency of three lines for three qubits, we identify all the special three-qubit core tensors using this new approach. Finally, we state a general algorithm for this approach on multi-qubit core tensors in Section \ref{SecFourQubitsAndGeneralization}, and demonstrate it with the case of four qubits.

\section{Matrix unfolding and higher order singular value decomposition} \label{SecMatrixUnfoldingAndHOSVD}

\subsection{Matrix unfolding}

The Hilbert space of a composite quantum system is given by the tensor product of its subsystems' Hilbert spaces. Because of this, the probability amplitudes of multipartite states are elements of higher order tensors, allowing us to make use of tensor decomposition in the classification of multipartite states \cite{Liu2012,Li2013,Li2014,Choong2020}. In order to write down higher order tensors in a way that obeys the matrix-tensor multiplication rules, a formalism called matrix unfolding \cite{Lathauwer2000} or matricization \cite{Kolda2009} of tensors was previously introduced.

\begin{definition}[Matrix unfolding \cite{Lathauwer2000}] \label{MatrixUnfolding}
Let $\Psi \in \mathbb{C}^{I_1} \otimes \ldots \otimes \mathbb{C}^{I_n} \otimes \ldots \otimes \mathbb{C}^{I_N}$ be an $N$th-order complex tensor. The $n$-th matrix unfolding, $\Psi_{(n)}$, is a matrix of size $I_n \times (I_{n+1} \times I_{n+2} \times \ldots \times I_N \times I_1 \times I_2 \times \ldots \times I_{n-1})$, whereby the tensor element $\psi_{i_1 i_2 \ldots i_n \ldots i_N}$ will be at the position with row index $i_n$ and column index
\begin{align}
& (i_{n+1} - 1) I_{n+2} I_{n+3} \ldots I_N I_1 I_2 \ldots I_{n-1} + (i_{n+2} -1) I_{n+3} I_{n+4} \ldots I_N I_1 I_2 \ldots I_{n-1} \nonumber \\ & + \ldots + (i_N -1) I_1 I_2 \ldots I_{n-1} + (i_1 -1) I_2 I_3 \ldots I_{n-1} + (i_2 -1) I_3 I_4 \ldots I_{n-1} \nonumber \\ & + \ldots + i_{n-1}.
\end{align}
\end{definition}

We redefine matrix unfolding by making use of the bra-ket notation.

\begin{definition}[Matrix unfolding in bra-ket notation] \label{MatrixUnfoldingAlternative}
Let $\Psi \in \mathbb{C}^{I_1} \otimes \ldots \otimes \mathbb{C}^{I_n} \otimes \ldots \otimes \mathbb{C}^{I_N}$ be an $N$th-order complex tensor. In the bra-ket notation, the $n$-th matrix unfolding, $\Psi_{(n)}$, rewrites $\Psi$ into the following matrix form,
\begin{align*}
\Psi_{(n)} = \sum_{i_1 \ldots i_N} \psi_{i_1 \ldots i_N} \left| i_n \right\rangle \left\langle i_{n+1} \ldots i_N i_1 \ldots i_{n-1} \right|.
\end{align*}
\end{definition}

From Definition \ref{MatrixUnfoldingAlternative}, we propose the following. The proof can be found in Appendix \ref{AppendixProofMatrixUnfolding}.

\begin{proposition}[Matrix unfolding and reduced density matrices] \label{MatrixUnfoldingRDMProposition}
The $n$-th matrix unfolding $\Psi_{(n)}$ of an $N$-th order tensor $\Psi \in \mathbb{C}^{I_1} \otimes \ldots \otimes \mathbb{C}^{I_n} \otimes \ldots \otimes \mathbb{C}^{I_N}$ is related to its one-body and $(n-1)$-body reduced density matrices, $\rho_n$ and $\rho_{n+1\, \ldots N\, 1\, \ldots n-1}$ respectively, through the following relations,
\begin{align}
\Psi_{(n)} \Psi_{(n)}^\dagger & = \rho_n, \label{MatrixUnfoldingOneBodyRDM} \\
\Psi_{(n)}^\text{T} \bar{\Psi}_{(n)} & = \rho_{n+1\, \ldots N\, 1\, \ldots n-1}. \label{MatrixUnfoldingMinusOneRDM}
\end{align}
\end{proposition}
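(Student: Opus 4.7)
The plan is to compute both sides of each identity directly from the bra-ket form of matrix unfolding in Definition \ref{MatrixUnfoldingAlternative}, and then match the resulting expressions against the partial traces that define the two reduced density matrices. As preparation I would first write the RDMs in explicit index-sum form: $\rho_n = \sum_{i_n,j_n}\bigl(\sum_{\text{other}} \psi_{\ldots i_n \ldots}\,\bar{\psi}_{\ldots j_n \ldots}\bigr)|i_n\rangle\langle j_n|$ (tracing out every subsystem except the $n$-th), and $\rho_{n+1\ldots N\,1\ldots n-1}$ as the analogous sum in which the $n$-th index is contracted while the complementary indices are kept in the cyclic order $n+1,\ldots,N,1,\ldots,n-1$ used by Definition \ref{MatrixUnfoldingAlternative}.

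For equation (\ref{MatrixUnfoldingOneBodyRDM}), I would form $\Psi_{(n)}^\dagger$ by conjugating the coefficients and swapping bra with ket in the expression for $\Psi_{(n)}$. Multiplying $\Psi_{(n)}\Psi_{(n)}^\dagger$ then produces an inner product $\langle i_{n+1}\ldots i_{n-1}\,|\,j_{n+1}\ldots j_{n-1}\rangle$ that collapses to $\prod_{k\ne n}\delta_{i_k j_k}$, contracting all complementary indices and leaving exactly the expansion of $\rho_n$. For equation (\ref{MatrixUnfoldingMinusOneRDM}) the manipulation is dual: $\Psi_{(n)}^{\text{T}}$ swaps bra with ket without conjugation and $\bar{\Psi}_{(n)}$ conjugates without swapping, so their product contracts on the single inner product $\langle i_n|j_n\rangle = \delta_{i_n j_n}$, which is precisely the partial trace over the $n$-th subsystem and yields $\rho_{n+1\ldots N\,1\ldots n-1}$.

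The computation is essentially bookkeeping and I do not anticipate a genuine obstacle. The only point that needs care is to keep the cyclic ordering $n+1,\ldots,N,1,\ldots,n-1$ of the complementary subsystems consistent between the column labels of Definition \ref{MatrixUnfoldingAlternative} and the subscripts of $\rho_{n+1\ldots N\,1\ldots n-1}$, which is exactly why the right-hand side of the second identity carries that particular ordering rather than the natural one $1,\ldots,n-1,n+1,\ldots,N$. Presenting the two identities in parallel will make clear that they are symmetric duals of the same index-contraction: one kills the complementary block of indices while leaving mode $n$ free, the other kills mode $n$ while leaving the complementary block free.
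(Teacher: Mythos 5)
Your proposal is correct and is essentially the paper's own argument run in the opposite direction: the paper starts from the partial traces (each producing a Kronecker delta $\delta_{i_k j_k}$) and recognizes the results as $\Psi_{(n)}\Psi_{(n)}^\dagger$ and $\Psi_{(n)}^{\text{T}}\bar{\Psi}_{(n)}$, while you expand the matrix products and match them to the partial traces, with the same index contractions doing the work. The one presentational difference is that the paper handles the cyclic ordering $n+1,\ldots,N,1,\ldots,n-1$ by explicitly conjugating the standard-ordered $(n-1)$-body RDM with a permutation matrix $P_\pi$, whereas you build that ordering into the definition of $\rho_{n+1\,\ldots N\,1\,\ldots n-1}$ from the outset; both are fine.
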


\subsection{Higher order singular value decomposition}

Next, we introduce higher order singular value decomposition (HOSVD) \cite{Lathauwer2000} and its matrix unfolding variant \cite{Lathauwer2000,Li2013}.

\begin{theorem}[Higher order singular value decomposition \cite{Lathauwer2000}] \label{HOSVDTheorem}
Let $\Psi \in \mathbb{C}^{I_1} \otimes \ldots \otimes \mathbb{C}^{I_n} \otimes \ldots \otimes \mathbb{C}^{I_N}$ be an $N$th-order complex tensor. There exists a core tensor $\mathcal{T}$ of $\Psi$ and a set of unitary matrices $U^{(1)},\, \ldots,\, U^{(n)},\, \ldots,\, U^{(N)}$ such that
\begin{align}
\Psi = U^{(1)} \otimes U^{(2)} \otimes \ldots \otimes U^{(n)} \otimes \ldots \otimes U^{(N)} \mathcal{T}. \label{HOSVD}
\end{align}
The core tensor $\mathcal{T}$ is also an $N$th-order complex tensor of which the subtensors $\mathcal{T}_{i_{n}=\alpha}$, obtained by fixing the $n$-th index to $\alpha$, have the properties of
\begin{enumerate}
\item{
\emph{All-orthogonality}: Two subtensors $\mathcal{T}_{i_{n}=\alpha}$ and $\mathcal{T}_{i_{n}=\beta}$ are orthogonal for all possible values of $n$, $\alpha$ and $\beta$, subject to $\alpha \neq \beta$:
\begin{align}
\left \langle \mathcal{T}_{i_{n}=\alpha}, \mathcal{T}_{i_{n}=\beta} \right \rangle & = \sum_{i_1 i_2 \ldots i_{n-1} i_{n+1} \ldots i_N} \bar{t}_{i_1 i_2 \ldots i_{n-1} \alpha i_{n+1} \ldots i_N} t_{i_1 i_2 \ldots i_{n-1} \beta i_{n+1} \ldots i_N} \nonumber \\ & = 0 \; \; \text{when} \; \; \alpha \neq \beta;
\end{align}
}
\item{
\emph{Ordering}:
\begin{align}
\left| \mathcal{T}_{i_{n}=1} \right| \geq \left| \mathcal{T}_{i_{n}=2} \right| \geq \ldots \geq \left| \mathcal{T}_{i_{n}=I_n} \right| \geq 0
\end{align}
for all possible values of $n$,
}
\end{enumerate}
where $t_{i_1 i_2 \ldots i_N}$ is the element of the tensor $\mathcal{T}$. The Frobenius norm of the subtensors $\left| \mathcal{T}_{i_n=i} \right|$ is given as
\begin{align}
\left| \mathcal{T}_{i_n=i} \right| & = \sqrt{\langle \mathcal{T}_{i_n=i},\, \mathcal{T}_{i_n=i} \rangle} \nonumber \\ & = \sqrt{\sum_{i_1 = 1}^{I_1} \ldots \sum_{i_{n-1} = 1}^{I_{n-1}} \sum_{i_{n+1} = 1}^{I_{n+1}} \ldots \sum_{i_N = 1}^{I_N} \bar{t}_{i_1 \ldots i_{n-1} i i_{n+1} \ldots i_N} t_{i_1 \ldots i_{n-1} i i_{n+1} \ldots i_N}} \nonumber \\ & = \sqrt{\sum_{i_1 = 1}^{I_1} \ldots \sum_{i_{n-1} = 1}^{I_{n-1}} \sum_{i_{n+1} = 1}^{I_{n+1}} \ldots \sum_{i_N = 1}^{I_N} \left| t_{i_1 \ldots i_{n-1} i i_{n+1} \ldots i_N} \right|^2}.
\end{align} 
and is called the $n$-mode singular value of $\Psi$, $\sigma_i^{(n)}$.
\end{theorem}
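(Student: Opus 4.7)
The plan is to prove the theorem constructively, obtaining the unitaries $U^{(n)}$ from ordinary matrix SVDs performed on each matrix unfolding $\Psi_{(n)}$, and then defining the core tensor $\mathcal{T}$ via the multilinear action of the conjugate transposes. Concretely, for each mode $n \in \{1,\ldots,N\}$, I would first apply the Schmidt/SVD theorem to the matrix $\Psi_{(n)}$ to obtain a factorization $\Psi_{(n)} = U^{(n)} \Sigma^{(n)} V^{(n)\dagger}$, where $U^{(n)}$ is an $I_n \times I_n$ unitary, $\Sigma^{(n)}$ is a diagonal matrix of non-negative real singular values arranged in non-increasing order, and $V^{(n)}$ is unitary. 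This produces the required $N$ unitaries. By Proposition \ref{MatrixUnfoldingRDMProposition} one also has $U^{(n)} (\Sigma^{(n)})^2 U^{(n)\dagger} = \rho_n$, so $U^{(n)}$ is precisely the unitary that diagonalises the $n$-th one-body RDM, and $\sigma_i^{(n)2}$ are its eigenvalues.

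Next I would \emph{define} the core tensor by
\begin{align*}
\mathcal{T} \;=\; \bigl(U^{(1)\dagger} \otimes U^{(2)\dagger} \otimes \cdots \otimes U^{(N)\dagger}\bigr)\, \Psi,
\end{align*}
so that, by the unitarity of each $U^{(n)}$, applying $U^{(1)} \otimes \cdots \otimes U^{(N)}$ to both sides recovers the decomposition \eqref{HOSVD}. It then remains to verify the all-orthogonality and ordering properties of $\mathcal{T}$. I would check these by computing the mode-$n$ unfolding of $\mathcal{T}$. Using the bra-ket form of Definition \ref{MatrixUnfoldingAlternative}, one sees that $\mathcal{T}_{(n)}$ equals
\begin{align*}
\mathcal{T}_{(n)} \;=\; U^{(n)\dagger}\, \Psi_{(n)}\, \bigl(U^{(n+1)} \otimes \cdots \otimes U^{(N)} \otimes U^{(1)} \otimes \cdots \otimes U^{(n-1)}\bigr)^{*},
\end{align*}
because the multilinear action by $U^{(k)\dagger}$ on mode $k$ becomes, after unfolding along mode $n \neq k$, right multiplication by $U^{(k)*}$ on the appropriate factor of the Kronecker product indexing the columns. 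Substituting the SVD $\Psi_{(n)} = U^{(n)} \Sigma^{(n)} V^{(n)\dagger}$ gives
\begin{align*}
\mathcal{T}_{(n)} \;=\; \Sigma^{(n)}\, \bigl[V^{(n)\dagger}\, (U^{(n+1)} \otimes \cdots \otimes U^{(n-1)})^{*}\bigr],
\end{align*}
whose rows are mutually orthogonal and whose $i$-th row has Euclidean norm $\sigma_i^{(n)}$, since a Kronecker product of unitaries is unitary and right multiplication by a unitary preserves row orthonormality.

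Finally, I would interpret this matrix statement at the level of subtensors: the $i$-th row of $\mathcal{T}_{(n)}$ is, by Definition \ref{MatrixUnfoldingAlternative}, precisely the vectorisation of the subtensor $\mathcal{T}_{i_n = i}$. Orthogonality of rows of $\mathcal{T}_{(n)}$ is therefore equivalent to $\langle \mathcal{T}_{i_n=\alpha}, \mathcal{T}_{i_n=\beta} \rangle = 0$ for $\alpha \neq \beta$, giving all-orthogonality; and the $i$-th row norm equals $|\mathcal{T}_{i_n=i}| = \sigma_i^{(n)}$, so the monotone ordering of the singular values delivers the ordering property. Applying this argument for each $n \in \{1,\ldots,N\}$ completes the proof. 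The main obstacle I anticipate is the bookkeeping step of showing that the multilinear action on $\Psi$ by $\otimes_k U^{(k)\dagger}$ unfolds in mode $n$ to left multiplication by $U^{(n)\dagger}$ and right multiplication by the cyclically ordered Kronecker product of the remaining $U^{(k)*}$; this is routine but index-heavy, and requires carefully matching the column-index convention fixed in Definition \ref{MatrixUnfolding}. Everything else reduces to standard SVD facts.
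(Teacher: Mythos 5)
The paper states Theorem \ref{HOSVDTheorem} without proof, importing it directly from \cite{Lathauwer2000}; your construction --- extracting $U^{(n)}$ from the ordinary SVD $\Psi_{(n)} = U^{(n)}\Sigma^{(n)}V^{(n)\dagger}$ of each unfolding, defining $\mathcal{T}$ by the inverse multilinear action, and reading off all-orthogonality and ordering from $\mathcal{T}_{(n)} = \Sigma^{(n)}\bigl[V^{(n)\dagger}\,(U^{(n+1)}\otimes\cdots\otimes U^{(n-1)})^{*}\bigr]$ together with the identification of rows of $\mathcal{T}_{(n)}$ with the subtensors $\mathcal{T}_{i_n=i}$ --- is precisely the original proof in that reference, and it is correct. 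The index-heavy bookkeeping step you flag is exactly the content of Theorem \ref{HOSVDMatrixUnfoldingTheorem}, which the paper records separately (and which, applied to your already-defined $\mathcal{T}$, involves no circularity), so you may simply invoke it; your observation that $U^{(n)}$ then diagonalises $\rho_n$ via Proposition \ref{MatrixUnfoldingRDMProposition} is likewise consistent with the paper's Theorem \ref{HOSVDReducedDensityMatrixTheorem}.
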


\begin{theorem}[Matrix unfolding of HOSVD \cite{Lathauwer2000,Li2013}] \label{HOSVDMatrixUnfoldingTheorem}
Let $\Psi \in \mathbb{C}^{I_1} \otimes \ldots \otimes \mathbb{C}^{I_n} \otimes \ldots \otimes \mathbb{C}^{I_N}$ be an $N$th-order complex tensor and $\mathcal{T}$ be its core tensor. The matrix unfolding of $\Psi$ and $\mathcal{T}$ can be obtained as
\begin{align}
\Psi_{(n)} = U^{(n)} T_{(n)} (U^{(n+1)} \otimes U^{(n+2)} \otimes \ldots \otimes U^{(N)} \otimes U^{(1)} \otimes U^{(2)} \otimes \ldots \otimes U^{(n-1)})^T, \label{HOSVDEquationMatrixUnfolding}
\end{align}
where $\Psi_{(n)}$ and $T_{(n)}$ are complex matrices of size $I_n \times (I_{n+1} \times I_{n+2} \times \ldots \times I_N \times I_1 \times I_2 \times \ldots \times I_{n-1})$, and $U^{(n)}$ are unitary matrices of size $I_n \times I_n$.
\end{theorem}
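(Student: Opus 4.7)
The plan is to substitute the component-wise form of the HOSVD equation (\ref{HOSVD}) directly into the bra-ket expression for matrix unfolding given in Definition \ref{MatrixUnfoldingAlternative}, and then rearrange the resulting double sum so that the unitaries move outside of a residual sum that is recognizable as $T_{(n)}$. Writing HOSVD in components as $\psi_{i_1 \ldots i_N} = \sum_{j_1 \ldots j_N} u^{(1)}_{i_1 j_1} u^{(2)}_{i_2 j_2} \cdots u^{(N)}_{i_N j_N} t_{j_1 \ldots j_N}$, with $u^{(k)}_{i_k j_k}$ denoting the entries of $U^{(k)}$, reduces the proof to a bookkeeping exercise of collecting the $N$ scalar factors into one left-multiplication by $U^{(n)}$, one right-multiplication by a Kronecker product, and a surviving bra-ket sum equal to $T_{(n)}$.

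The two key identities driving the factorization are, on the ket side, $\sum_{i_n} u^{(n)}_{i_n j_n} |i_n\rangle = U^{(n)} |j_n\rangle$, which pulls $U^{(n)}$ out on the left; and on the bra side, for each $k \neq n$, $\sum_{i_k} u^{(k)}_{i_k j_k} \langle i_k| = \langle j_k| (U^{(k)})^T$, since $(U^{(k)T})_{j_k i_k} = u^{(k)}_{i_k j_k}$. Factorizing the column bra as $\langle i_{n+1} \ldots i_N i_1 \ldots i_{n-1}| = \langle i_{n+1}| \otimes \ldots \otimes \langle i_{n-1}|$ and applying this second identity to each tensor factor, the mixed-product property of the Kronecker product consolidates the $N-1$ scalar sums into a single right-multiplication by $(U^{(n+1)})^T \otimes \ldots \otimes (U^{(N)})^T \otimes (U^{(1)})^T \otimes \ldots \otimes (U^{(n-1)})^T$. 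Using $(A \otimes B)^T = A^T \otimes B^T$, this rewrites as $(U^{(n+1)} \otimes U^{(n+2)} \otimes \ldots \otimes U^{(N)} \otimes U^{(1)} \otimes \ldots \otimes U^{(n-1)})^T$, matching the statement (\ref{HOSVDEquationMatrixUnfolding}) once the remaining inner sum $\sum_{j_1 \ldots j_N} t_{j_1 \ldots j_N} |j_n\rangle \langle j_{n+1} \ldots j_N j_1 \ldots j_{n-1}|$ is recognized as $T_{(n)}$ by the very same Definition \ref{MatrixUnfoldingAlternative}.

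The main point requiring care, rather than a genuine obstacle, is the bookkeeping of the cyclic index ordering: the column label $(i_{n+1}, \ldots, i_N, i_1, \ldots, i_{n-1})$ pinned down by Definition \ref{MatrixUnfoldingAlternative} must be aligned with the Kronecker-product ordering on the right-hand side, and the transpose must be tracked consistently (it arises purely from rewriting the scalar entries $u^{(k)}_{i_k j_k}$ as components of matrices acting on row vectors $\langle i_k|$). Because the cyclic ordering in the definition and the mixed-product property act factor by factor in the same order, the correspondence is forced and no nontrivial algebraic manipulation beyond linear-algebraic rearrangement is required.
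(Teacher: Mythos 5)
Your argument is correct: substituting the component form of equation (\ref{HOSVD}) into Definition \ref{MatrixUnfoldingAlternative}, pulling $U^{(n)}$ out on the ket side, converting each bra-side sum $\sum_{i_k} u^{(k)}_{i_k j_k}\langle i_k|$ into $\langle j_k|(U^{(k)})^{T}$, and consolidating via the mixed-product property does yield (\ref{HOSVDEquationMatrixUnfolding}), and the cyclic ordering $(i_{n+1},\ldots,i_N,i_1,\ldots,i_{n-1})$ in Definition \ref{MatrixUnfoldingAlternative} is exactly the ordering needed for the Kronecker factors to line up. Note, however, that the paper offers no proof of this theorem to compare against --- it is imported verbatim from the cited references \cite{Lathauwer2000,Li2013}, and the appendix only proves Proposition \ref{MatrixUnfoldingRDMProposition} and Theorem \ref{HOSVDReducedDensityMatrixTheorem}; your verification is the standard one found in those references and would be a reasonable addition if the authors wished to make the paper self-contained.
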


Due to Proposition \ref{MatrixUnfoldingRDMProposition}, we state the following. The proof can be found in Appendix \ref{AppendixProofHOSVD}.

\begin{theorem}[HOSVD and one-body reduced density matrices] \label{HOSVDReducedDensityMatrixTheorem}
Let $\Psi \in \mathbb{C}^{I_1} \otimes \ldots \otimes \mathbb{C}^{I_n} \otimes \ldots \otimes \mathbb{C}^{I_N}$ be an $N$th-order complex tensor and $\mathcal{T}$ be its core tensor. HOSVD simultaneously diagonalizes the set of one-body reduced density matrices of multipartite states in such a way that the $n$-mode singular values are ordered. The all-orthogonality conditions are the off-diagonal terms of the set of one-body reduced density matrices.
\end{theorem}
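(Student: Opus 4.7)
The plan is to chain together the two results already in hand, Proposition \ref{MatrixUnfoldingRDMProposition} and Theorem \ref{HOSVDMatrixUnfoldingTheorem}, so that the diagonalization is read off from the matrix unfolding of the core tensor. First I would substitute the HOSVD factorization
\begin{align*}
\Psi_{(n)} = U^{(n)} T_{(n)} \bigl(U^{(n+1)} \otimes \ldots \otimes U^{(N)} \otimes U^{(1)} \otimes \ldots \otimes U^{(n-1)}\bigr)^T
\end{align*}
into the identity $\rho_n = \Psi_{(n)} \Psi_{(n)}^\dagger$ from Proposition \ref{MatrixUnfoldingRDMProposition}. The Kronecker product of unitaries is itself unitary, so writing $V = U^{(n+1)} \otimes \ldots \otimes U^{(n-1)}$ one has $V^T (V^T)^\dagger = V^T \bar{V} = I$ by transposing the relation $V^\dagger V = I$. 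Everything to the right of $T_{(n)}$ therefore collapses, leaving $\rho_n = U^{(n)} \bigl(T_{(n)} T_{(n)}^\dagger\bigr) (U^{(n)})^\dagger$.

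The next step is to identify the middle matrix $T_{(n)} T_{(n)}^\dagger$ explicitly. Using Definition \ref{MatrixUnfoldingAlternative} to read off the entries of $T_{(n)}$, the $(\alpha,\beta)$ entry is
\begin{align*}
\bigl(T_{(n)} T_{(n)}^\dagger\bigr)_{\alpha \beta} = \sum_{i_1, \ldots, i_{n-1}, i_{n+1}, \ldots, i_N} t_{i_1 \ldots i_{n-1}\, \alpha\, i_{n+1} \ldots i_N}\, \bar{t}_{i_1 \ldots i_{n-1}\, \beta\, i_{n+1} \ldots i_N},
\end{align*}
which is precisely $\langle \mathcal{T}_{i_n=\beta}, \mathcal{T}_{i_n=\alpha} \rangle$. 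Invoking the all-orthogonality property of Theorem \ref{HOSVDTheorem} kills every off-diagonal entry, while the diagonal entries reduce to $|\mathcal{T}_{i_n=\alpha}|^2 = \sigma_\alpha^{(n)2}$ by the definition of the $n$-mode singular values. Hence $T_{(n)} T_{(n)}^\dagger = \operatorname{diag}\bigl(\sigma_1^{(n)2}, \ldots, \sigma_{I_n}^{(n)2}\bigr)$ and
\begin{align*}
\rho_n = U^{(n)} \operatorname{diag}\bigl(\sigma_1^{(n)2}, \ldots, \sigma_{I_n}^{(n)2}\bigr) (U^{(n)})^\dagger,
\end{align*}
a spectral decomposition whose eigenvalue ordering is forced by the HOSVD ordering property.

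Running this computation simultaneously for each $n = 1, \ldots, N$ yields the simultaneous diagonalization claim, with the common change-of-basis being precisely the local HOSVD unitary $U^{(n)}$ on the $n$th factor. The final sentence of the theorem is just a re-reading of the middle step: the off-diagonal entries of $T_{(n)} T_{(n)}^\dagger$, which are the off-diagonal entries of $\rho_n$ in the HOSVD eigenbasis, coincide term-for-term with the all-orthogonality inner products $\langle \mathcal{T}_{i_n=\alpha}, \mathcal{T}_{i_n=\beta} \rangle$.

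I do not expect a genuine obstacle here, since once the HOSVD unfolding is plugged into Proposition \ref{MatrixUnfoldingRDMProposition} the theorem is a bookkeeping exercise. The only place where care is needed is the cancellation of the tensor-product factor on the right of $T_{(n)}$: one must remember that it appears transposed (not conjugate-transposed) in Theorem \ref{HOSVDMatrixUnfoldingTheorem}, so the cancellation uses $V^T \bar{V} = I$ rather than $V V^\dagger = I$. Everything else is a direct translation between the index-level all-orthogonality condition and the matrix-level off-diagonal entries of the one-body RDMs.
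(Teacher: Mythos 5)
Your proposal is correct and follows essentially the same route as the paper's own proof: substitute the HOSVD matrix unfolding of Theorem \ref{HOSVDMatrixUnfoldingTheorem} into $\rho_n = \Psi_{(n)}\Psi_{(n)}^\dagger$ from Proposition \ref{MatrixUnfoldingRDMProposition}, cancel the unitary Kronecker factor, and recognize $T_{(n)}T_{(n)}^\dagger$ as diagonal because its off-diagonal entries are exactly the all-orthogonality inner products and its diagonal entries are the squared, ordered $n$-mode singular values. Your write-up is in fact slightly more careful than the paper's, which leaves the $V^T\bar{V} = I$ cancellation and the explicit entrywise identification of $T_{(n)}T_{(n)}^\dagger$ implicit.
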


\section{Concurrency of three lines and three qubits} \label{SecConcurrencyThreeLines}

\subsection{Classification of three qubits} \label{SubSecClassificationThreeQubits}

In this section, we briefly discuss the methodology that we have used previously in \cite{Choong2020}. The all-orthogonality conditions of three qubits are given as
\begin{align}
\bar{t}_{111} t_{211} + \bar{t}_{121} t_{221} + \bar{t}_{112} t_{212} + \bar{t}_{122} t_{222} & = 0, \label{ThreeQubitsAllOrthogonality1} \\
\bar{t}_{111} t_{121} + \bar{t}_{211} t_{221} + \bar{t}_{112} t_{122} + \bar{t}_{212} t_{222} & = 0, \label{ThreeQubitsAllOrthogonality2} \\
\bar{t}_{111} t_{112} + \bar{t}_{211} t_{212} + \bar{t}_{121} t_{122} + \bar{t}_{221} t_{222} & = 0. \label{ThreeQubitsAllOrthogonality3}
\end{align}
By writing $\bar{t}_{111}$ and $t_{222}$ in terms of other variables,
\begin{align}
& t_{111} = - \frac{\bar{t}_{221} (t_{121} t_{212} - t_{122} t_{211}) + t_{112} (\left| t_{212} \right|^2 - \left| t_{122} \right|^2)}{t_{212} \bar{t}_{211} - t_{122} \bar{t}_{121}}, \label{t111} \\
& t_{222} = \frac{\bar{t}_{112} (t_{121} t_{212} - t_{122} t_{211}) + t_{221} (\left| t_{121} \right|^2 - \left| t_{211} \right|^2)}{\bar{t}_{212} t_{211} - \bar{t}_{122} t_{121}}, \label{t222}
\end{align}
we obtain
\begin{align}
& (\bar{t}_{221} t_{121} - \bar{t}_{212} t_{112}) (\bar{t}_{112} t_{212} + \bar{t}_{121} t_{221}) \nonumber \\ & \qquad + (\bar{t}_{122} t_{112} - \bar{t}_{221} t_{211}) (\bar{t}_{211} t_{221} + \bar{t}_{112} t_{122}) \nonumber \\ & \qquad + (\bar{t}_{212} t_{211} - \bar{t}_{122} t_{121}) (\bar{t}_{211} t_{212} + \bar{t}_{121} t_{122}) = 0. \label{ThreeQubitPencil}
\end{align}
After expanding equation (\ref{ThreeQubitPencil}), it is possible to separate the real and imaginary parts,
\begin{align}
& \left| t_{112} \right|^2 (\left| t_{122} \right|^2 - \left| t_{212} \right|^2) + \left| t_{121} \right|^2 (\left| t_{221} \right|^2 - \left| t_{122} \right|^2) \nonumber \\ & \qquad + \left| t_{211} \right|^2 (\left| t_{212} \right|^2 - \left| t_{221} \right|^2) = 0, \label{ThreeQubitsSub1} \\
& \bar{t}_{112} \bar{t}_{221} (t_{122} t_{211} - t_{121} t_{212}) + \bar{t}_{121} \bar{t}_{212} (t_{112} t_{221} - t_{122} t_{211}) \nonumber \\ & \qquad + \bar{t}_{122} \bar{t}_{211} (t_{121} t_{212} - t_{112} t_{221}) = 0. \label{ThreeQubitsSub2}
\end{align}
Equation (\ref{ThreeQubitsSub1}) is the basis to our previous work since it provides explicit relationship between the first $n$-mode singular values, $\sigma_1^{(i)2}$, i.e.
\begin{align}
& \left| t_{112} \right|^2 \left[ \sigma_1^{(1)2} - \sigma_1^{(2)2} \right] + \left| t_{211} \right|^2 \left[ \sigma_1^{(2)2} - \sigma_1^{(3)2} \right] + \left| t_{121} \right|^2 \left[ \sigma_1^{(3)2} - \sigma_1^{(1)2} \right] \nonumber \\ & \qquad = 0. \label{ThreeQubitsFinalAllOrtho}
\end{align}

On the other hand, equation (\ref{ThreeQubitsSub2}) fixes a relative phase of the three-qubit states. Since our results are based on the first $n$-mode singular values $\sigma_1^{(i)2}$, the relative phase does not affect our results. \newline
\textbf{Example:} Consider the following state $\left| \psi_1 \right\rangle$,
\begin{align*}
\left| \psi_1 \right\rangle & = t_{111} \left| 111 \right\rangle + t_{112} \left| 112 \right\rangle + t_{112} \left| 121 \right\rangle + t_{122} \left| 122 \right\rangle \\ & \quad + t_{211} \left| 211 \right\rangle + t_{212} \left| 212 \right\rangle + t_{212} \left| 221 \right\rangle + t_{222} \left| 222 \right\rangle,
\end{align*}
where it satisfies one of the bi-separable conditions $A|BC$, $t_{121} t_{212} = t_{112} t_{221}$. Hence, equation (\ref{ThreeQubitsSub2}) is satisfied. The all-orthogonality conditions are
\begin{align*}
\bar{t}_{111} t_{211} + 2 \bar{t}_{112} t_{212} + \bar{t}_{122} t_{222} & = 0, \\
\bar{t}_{111} t_{112} + \bar{t}_{211} t_{212} + \bar{t}_{112} t_{122} + \bar{t}_{212} t_{222} & = 0.
\end{align*}
The state $\left| \psi_1 \right\rangle$ has the same property ($\sigma_1^{(2)2} = \sigma_1^{(3)2} \neq \sigma_1^{(1)2}$) as the Slice state $\left| \text{S}_3 \right\rangle$,
\begin{align*}
\left| \text{S}_3 \right\rangle & = t_{111} \left| 111 \right\rangle + t_{122} \left| 122 \right\rangle + t_{211} \left| 211 \right\rangle + t_{222} \left| 222 \right\rangle
\end{align*}
with all-orthogonality condition
\begin{align*}
\bar{t}_{111} t_{211} + \bar{t}_{122} t_{222} = 0.
\end{align*}
Under a coarser classification procedure provided by equation (\ref{ThreeQubitsFinalAllOrtho}), they belong to the same family of states.

\subsection{Concurrency of three lines}

Now, let $L_1,\, L_2,\, L_3$ to be three lines intersecting at one point $(x,\, y)$,
\begin{align}
L_1 & \equiv a_1 x + b_1 y + c_1 = 0, \\
L_2 & \equiv a_2 x + b_2 y + c_2 = 0, \\
L_3 & \equiv a_3 x + b_3 y + c_3 = 0,
\end{align}
where $a_i,\, b_i,\, c_i$ for $i = 1,\, 2,\, 3$ are some coefficients and $x,\, y$ are indeterminates. In order to find the solution $(x,\, y)$ to the set of lines, we can substitute $x$ from $L_1$ and $y$ from $L_2$ into $L_3$ to get
\begin{align}
a_3 (b_1 c_2 - b_2 c_1) + b_3 (a_2 c_1 - a_1 c_2) + c_3 (a_1 b_2 - a_2 b_1) = 0. \label{ConcurrencyThreeLines}
\end{align}
Equation (\ref{ConcurrencyThreeLines}) can be written into a more concise form as
\begin{align}
\begin{vmatrix}
a_1 & b_1 & c_1 \\
a_2 & b_2 & c_2 \\
a_3 & b_3 & c_3
\end{vmatrix} = 0, \label{ConcurrencyThreeLinesDeterminant}
\end{align}
which is called the concurrency of three lines \cite{Pedoe1970}. By comparison, it is obvious that the derivation in Section \ref{SubSecClassificationThreeQubits} is the same as the concurrency of three lines, with $(x,\,y) = (\bar{t}_{111},\,t_{222})$. Since we did not specify the underlying field when deriving equation (\ref{ConcurrencyThreeLines}), the concurrency of three lines can be applied to complex field as long as the inherent properties of $x$ and $y$ (i.e. complex conjugate of $x$ and $y$) are not being used while solving the set of equations algebraically \cite{Todd1947}.

The biggest advantage in using the concurrency of three lines is that it is easier to find the solutions in the determinant form (\ref{ConcurrencyThreeLinesDeterminant}) in contrary to the polynomial form (\ref{ConcurrencyThreeLines}). There are two ways for a determinant to be zero,
\begin{enumerate}
\item At least one row (column) of the determinant is zero.

\item At least one row (column) of the determinant is linearly dependent to the other row (column).
\end{enumerate}

However, the linear dependence between rows (columns) of a determinant can always be decomposed into a combination of the former scenario, i.e. one row (column) of the determinant is zero. For instance, the linear dependence $L_1 = k'_2 L_2 + k'_3 L_3$, where $k'_i = - \frac{k_i}{k_1}$ and $i = 2,\, 3$ can be written as
\begin{align*}
\begin{vmatrix}
a_1 & b_1 & c_1 \\
a_2 & b_2 & c_2 \\
a_3 & b_3 & c_3
\end{vmatrix} & = \begin{vmatrix}
k'_2 a_2 & k'_2 b_2 & k'_2 c_2 \\
a_2 & b_2 & c_2 \\
a_3 & b_3 & c_3
\end{vmatrix} + \begin{vmatrix}
k'_3 a_3 & k'_3 b_3 & k'_3 c_3 \\
a_2 & b_2 & c_2 \\
a_3 & b_3 & c_3
\end{vmatrix} \\
& = \begin{vmatrix}
0 & 0 & 0 \\
a_2 & b_2 & c_2 \\
a_3 & b_3 & c_3
\end{vmatrix} + \begin{vmatrix}
0 & 0 & 0 \\
a_2 & b_2 & c_2 \\
a_3 & b_3 & c_3
\end{vmatrix} \\
& = \begin{vmatrix}
k'_2 a_2 & k'_2 b_2 & k'_2 c_2 \\
0 & 0 & 0 \\
a_3 & b_3 & c_3
\end{vmatrix} + \begin{vmatrix}
k'_3 a_3 & k'_3 b_3 & k'_3 c_3 \\
a_2 & b_2 & c_2 \\
0 & 0 & 0
\end{vmatrix} = \ldots
\end{align*}
Therefore, we will be able to identify all unique solutions to equation (\ref{ConcurrencyThreeLinesDeterminant}) by studying only the former scenario.

Furthermore, in order to simplify the computational process, we focus only on the minimum requirement for a determinant to be zero, i.e. when one row (column) of the determinant is zero. This consideration does not generate generalized GHZ states, however it can be recognized as
\begin{align}
\left| \text{GHZ} \right\rangle = t_{1 \ldots 1} \left| 1 \ldots 1 \right\rangle + t_{2 \ldots 2} \left| 2 \ldots 2 \right\rangle.
\end{align}

\subsection{Conjugate concurrent variables}

Since the inherent properties of $x$ and $y$ cannot be used while solving the all-orthogonality conditions, we can make use of the concurrency of three lines. We formalize this idea with the following definition.

\begin{definition}[Conjugate concurrent variables]
Let $\{L_i\}$ be a set of all-orthogonality conditions. A pair of conjugate concurrent variables $(x,\,y)$ satisfies the following two criteria:-
\begin{enumerate}
\item The relative phase between the conjugate concurrent variables is preserved throughout the all-orthogonality conditions;

\item The pair of conjugate concurrent variables must exist in every all-orthogonality conditions.
\end{enumerate}
\end{definition}

The first criterion is stated so that we do not make use of the inherent properties of the conjugate concurrent variables (CCV). From equations (\ref{ThreeQubitsAllOrthogonality1}) to (\ref{ThreeQubitsAllOrthogonality3}), there are four pairs of CCV: $(\bar{t}_{111},\,t_{222})$, $(\bar{t}_{112},\,t_{221})$, $(\bar{t}_{121},\,t_{212})$ and $(\bar{t}_{122},\,t_{211})$. Meanwhile, $(\bar{t}_{111},\,\bar{t}_{112})$ is not a pair of CCV because the relative phase between $\bar{t}_{111}$ and $\bar{t}_{112}$ changes in equation (\ref{ThreeQubitsAllOrthogonality3}). One has to make use of the inherent property of $\bar{t}_{112}$ as a complex variable to be able to solve the all-orthogonality conditions.

The second criterion is required so that the solutions that we found will satisfy every all-orthogonality conditions. This implies that that the current approach is limited to multi-qubit systems. For instance, if we consider the all-orthogonality conditions of a $(2 \times 2 \times 3)$-system,
\begin{align*}
\bar{t}_{111} t_{211} + \bar{t}_{112} t_{212} + \bar{t}_{113} t_{213} + \bar{t}_{121} t_{221} + \bar{t}_{122} t_{222} + \bar{t}_{123} t_{223} & = 0, \\
\bar{t}_{111} t_{121} + \bar{t}_{211} t_{221} + \bar{t}_{112} t_{122} + \bar{t}_{212} t_{222} + \bar{t}_{113} t_{123} + \bar{t}_{213} t_{223} & = 0, \\
\bar{t}_{111} t_{112} + \bar{t}_{121} t_{122} + \bar{t}_{211} t_{212} + \bar{t}_{221} t_{222} & = 0, \\
\bar{t}_{111} t_{113} + \bar{t}_{121} t_{123} + \bar{t}_{211} t_{213} + \bar{t}_{221} t_{223} & = 0, \\
\bar{t}_{112} t_{113} + \bar{t}_{122} t_{123} + \bar{t}_{212} t_{213} + \bar{t}_{222} t_{223} & = 0,
\end{align*}
we can see that some of the variables do not exist in every all-orthogonality conditions. Therefore, we say that a pair of CCV does not exist in this $(2 \times 2 \times 3)$-system.

\subsection{Special three-qubit core tensors by concurrency of three lines}

From equations (\ref{ThreeQubitsAllOrthogonality1}) to (\ref{ThreeQubitsAllOrthogonality3}), the concurrency of three lines for all-orthogonality conditions of three qubits is given by
\begin{align}
\begin{vmatrix}
t_{211} & \bar{t}_{122} & \bar{t}_{121} t_{221} + \bar{t}_{112} t_{212} \\
t_{121} & \bar{t}_{212} & \bar{t}_{211} t_{221} + \bar{t}_{112} t_{122} \\
t_{112} & \bar{t}_{221} & \bar{t}_{211} t_{212} + \bar{t}_{121} t_{122}
\end{vmatrix} = 0, \label{ConcurrencyAllOrthogonality}
\end{align}
where $(x,\, y) = (\bar{t}_{111}, t_{222})$. The details of our calculations will be shown in Appendix \ref{AppendixConcurrencyThreeQubits}. The results are summarized in Table \ref{TableConcurrencyThreeQubits}.

From Table \ref{TableConcurrencyThreeQubits}, we can see that by considering only the minimum requirements to satisfy equation (\ref{ConcurrencyAllOrthogonality}), it is enough to recover all the generic special states of three qubits besides the generalized GHZ states,
\begin{align}
\left| \text{GHZ} \right\rangle = t_{111} \left| 111 \right\rangle + t_{222} \left| 222 \right\rangle.
\end{align}
Some of the special states that we have identified are not generic because of the ordering property of higher order singular value decomposition (HOSVD). As an example, if we study the following three-qubit state,
\begin{align*}
& \left| \psi \right\rangle = t_{121} \left| 121 \right\rangle + t_{122} \left| 122 \right\rangle + t_{211} \left| 211 \right\rangle + t_{212} \left| 212 \right\rangle, \\
& \bar{t}_{211} t_{212} + \bar{t}_{121} t_{122} = 0,
\end{align*}
the first $n$-mode singular values are given as
\begin{align*}
\sigma_1^{(1)2} & = \left| t_{121} \right|^2 + \left| t_{122} \right|^2 = \sigma_2^{(2)2}, \\
\sigma_1^{(2)2} & = \left| t_{211} \right|^2 + \left| t_{212} \right|^2 = \sigma_2^{(1)2}, \\
\sigma_1^{(3)2} & = \left| t_{121} \right|^2 + \left| t_{211} \right|^2.
\end{align*}
Due to the ordering property, equality is possible only when $(\sigma_1^{(1)2},\,\sigma_1^{(2)2},\,\sigma_1^{(3)2}) = (\frac{1}{2},\, \frac{1}{2},\, \sigma_1^{(3)2})$. Therefore, it is not a generic special state of three qubits.

\scriptsize
\begin{longtable}{|p{0.4\linewidth}|p{0.6\linewidth}|}
\caption{Special three-qubit core tensors due to the concurrency of three lines for all-orthogonality conditions of three qubits} \label{TableConcurrencyThreeQubits} \\
\hline
Row (Column) checking & States \\
\hline
1. Column 1 = 0 & $\left| \text{B}_1 \right\rangle = t_{111} \left| 111 \right\rangle + t_{122} \left| 122 \right\rangle + t_{212} \left| 212 \right\rangle + t_{221} \left| 221 \right\rangle$ \\
\hline
2. Column 2 = 0 & $\left| \text{B}_2 \right\rangle = t_{112} \left| 112 \right\rangle + t_{121} \left| 121 \right\rangle + t_{211} \left| 211 \right\rangle + t_{222} \left| 222 \right\rangle$ \\
\hline
3. Column 3 = 0 (Non-generic) & (a) $t_{112} = t_{221} = 0$ \\ & \phantom{(a)} $\left| \psi \right\rangle = t_{121} \left| 121 \right\rangle + t_{122} \left| 122 \right\rangle + t_{211} \left| 211 \right\rangle + t_{212} \left| 212 \right\rangle,$ \\ & \phantom{(a)} $\bar{t}_{211} t_{212} + \bar{t}_{121} t_{122} = 0$ \\
\cline{2-2}
& (b) $t_{121} = t_{212} = 0$ \\ & \phantom{(b)} $\left| \psi \right\rangle = t_{112} \left| 112 \right\rangle + t_{122} \left| 122 \right\rangle + t_{211} \left| 211 \right\rangle + t_{221} \left| 221 \right\rangle,$ \\ & \phantom{(b)} $\bar{t}_{211} t_{221} + \bar{t}_{112} t_{122} = 0$ \\
\cline{2-2}
& (c) $t_{122} = t_{211} = 0$ \\ & \phantom{(c)} $\left| \psi \right\rangle = t_{112} \left| 112 \right\rangle + t_{121} \left| 121 \right\rangle + t_{212} \left| 212 \right\rangle + t_{221} \left| 221 \right\rangle,$ \\ & \phantom{(c)} $\bar{t}_{121} t_{221} + \bar{t}_{112} t_{212} = 0$ \\
\hline
4. Row 1 = 0 & (a) $t_{111} = t_{222} = 0$ \\ & \phantom{(a)} Same as 3(c) \\
\cline{2-2}
& (b) $t_{112} = t_{221} = 0$ \\ & \phantom{(b)} $\left| \text{S}_2 \right\rangle = t_{111} \left| 111 \right\rangle + t_{121} \left| 121 \right\rangle + t_{212} \left| 212 \right\rangle + t_{222} \left| 222 \right\rangle,$ \\ & \phantom{(b)} $\bar{t}_{111} t_{121} + \bar{t}_{212} t_{222} = 0$ \\
\cline{2-2}
& (c) $t_{121} = t_{212} = 0$ \\ & \phantom{(c)} $\left| \text{S}_1 \right\rangle = t_{111} \left| 111 \right\rangle + t_{112} \left| 112 \right\rangle + t_{221} \left| 221 \right\rangle + t_{222} \left| 222 \right\rangle,$ \\ & \phantom{(c)} $\bar{t}_{111} t_{112} + \bar{t}_{221} t_{222} = 0$ \\
\hline
5. Row 2 = 0 & (a) $t_{111} = t_{222} = 0$ \\ & \phantom{(a)} Same as 3(b) \\
\cline{2-2}
& (b) $t_{112} = t_{221} = 0$ \\ & \phantom{(b)} $\left| \text{S}_3 \right\rangle = t_{111} \left| 111 \right\rangle + t_{122} \left| 122 \right\rangle + t_{211} \left| 211 \right\rangle + t_{222} \left| 222 \right\rangle,$ \\ & \phantom{(b)} $\bar{t}_{111} t_{211} + \bar{t}_{122} t_{222} = 0$ \\
\cline{2-2}
& (c) $t_{121} = t_{212} = 0$ \\ & \phantom{(c)} Same as 4 (c) \\
\hline
6. Row 3 = 0 & (a) $t_{111} = t_{222} = 0$ \\ & \phantom{(a)} Same as 3(a) \\
\cline{2-2}
& (b) $t_{122} = t_{211} = 0$ \\ & \phantom{(b)} Same as 4 (b) \\
\cline{2-2}
& (c) $t_{121} = t_{212} = 0$ \\ & \phantom{(c)} Same as 5 (c) \\
\hline
\end{longtable}
\normalsize

\section{Four qubits and beyond} \label{SecFourQubitsAndGeneralization}

\subsection{Generalization to multi-qubit states} \label{SubSecGeneralization}

For multi-qubit states, we can generalize our approach by the following algorithm.
\begin{enumerate}
\item Select a pair of conjugate concurrent variables (CCV) and formulate the concurrency of three lines accordingly;

\item Perform row (column) checking on the concurrency of three lines;

\item For a system of all-orthogonality conditions without a pair of CCV, find its family of states;

\item For a system of all-orthogonality conditions with a pair of CCV, select another pair of CCV and formulate the next iteration of concurrency of three lines accordingly;

\item The process stops when at most two all-orthogonality conditions are left.
\end{enumerate}

There are $n$ number of all-orthogonality conditions for $n$-qubit states. In order to formulate the concurrency of three lines for the set of all-orthogonality conditions, we need to exhaust all the possible combinations between the $n$ number of all-orthogonality conditions. This is a combinatorial problem of selecting three out of $n$-th all-orthogonality conditions, therefore the number of simultaneous concurrency of three lines that we can form is given by $\frac{n!}{3!(n-3)!}$.

In order to explore all the minimum requirements for the set of concurrency of three lines to be true, we need to have at least $n-2$ number of rows to be zero during the row checking. This is another combinatorial problem of selecting $n-2$ out of $n$ rows, which requires $\frac{n!}{2!(n-2)!}$ of row checking in total for one iteration. For column checking, we always need three checks regardless of the number of simultaneous concurrency of three lines that we have.

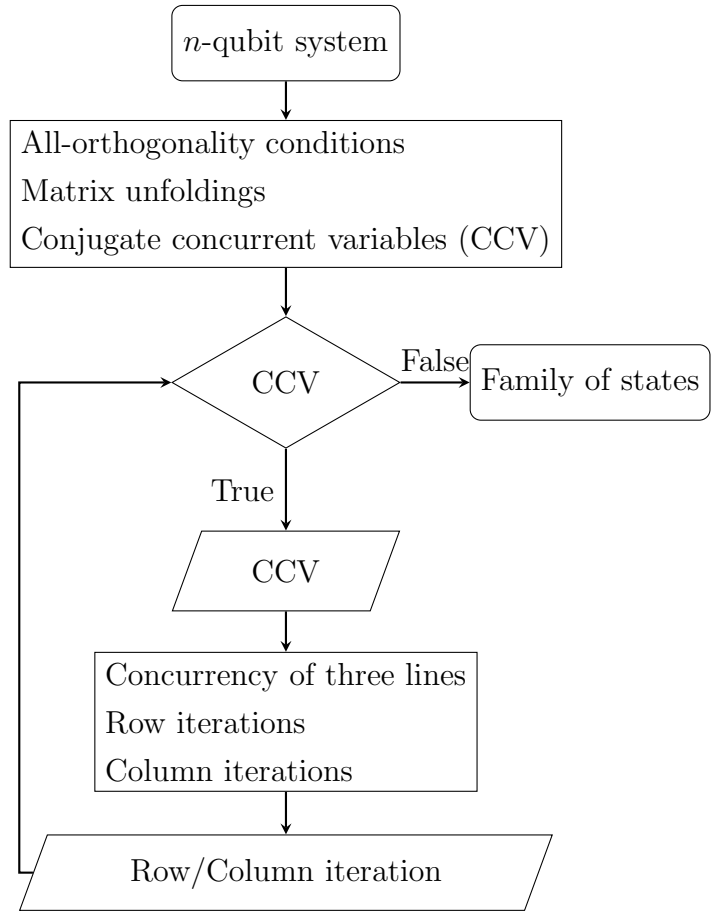
\begin{figure}[H]
\begin{center}
\begin{tikzpicture}[node distance=2cm]
\node(start)[startstop]{$n$-qubit system};
\node(pro1)[process,below of=start]{All-orthogonality conditions\\Matrix unfoldings\\Conjugate concurrent variables (CCV)};
\node(dec)[decision,below of=pro1,yshift=-0.5cm]{CCV};
\node(end)[startstop,right of=dec,xshift=2cm]{Family of states};
\node(in1)[io,below of=dec,yshift=-0.5cm]{CCV};
\node(pro2)[process,below of=in1]{Concurrency of three lines\\Row iterations\\Column iterations};
\node(in2)[io,below of=pro2]{Row/Column iteration};
\draw[arrow](start)--(pro1);
\draw[arrow](pro1)--(dec);
\draw[arrow](dec)--node[anchor=south]{False}(end);
\draw[arrow](dec)--node[anchor=east]{True}(in1);
\draw[arrow](in1)--(pro2);
\draw[arrow](pro2)--(in2);
\draw[arrow](in2.west)-|([xshift=-2cm]dec.west)--(dec.west);
\end{tikzpicture}
\caption{Flowchart for the identification of special multi-qubit states}
\end{center}
\end{figure}

Since we explore every requirements to satisfy the concurrency of three lines for the set of all-orthogonality conditions by going through several iterations, the choice of CCV does not matter.

\subsection{Special four-qubit core tensors by concurrency of three lines}

The all-orthogonality conditions for four qubits are given as
\begin{align}
& \bar{t}_{1111} t_{2111} + \bar{t}_{1112} t_{2112} + \bar{t}_{1121} t_{2121} + \bar{t}_{1122} t_{2122} + \bar{t}_{1211} t_{2211} + \bar{t}_{1212} t_{2212} \nonumber \\ & \qquad + \bar{t}_{1221} t_{2221} + \bar{t}_{1222} t_{2222} = 0, \label{FourQubitsAllOrthogonality1} \\
& \bar{t}_{1111} t_{1211} + \bar{t}_{2111} t_{2211} + \bar{t}_{1112} t_{1212} + \bar{t}_{2112} t_{2212} + \bar{t}_{1121} t_{1221} + \bar{t}_{2121} t_{2221} \nonumber \\ & \qquad + \bar{t}_{1122} t_{1222} + \bar{t}_{2122} t_{2222} = 0, \label{FourQubitsAllOrthogonality2} \\
& \bar{t}_{1111} t_{1121} + \bar{t}_{1211} t_{1221} + \bar{t}_{2111} t_{2121} + \bar{t}_{2211} t_{2221} + \bar{t}_{1112} t_{1122} + \bar{t}_{1212} t_{1222} \nonumber \\ & \qquad + \bar{t}_{2112} t_{2122} + \bar{t}_{2212} t_{2222} = 0, \label{FourQubitsAllOrthogonality3} \\
& \bar{t}_{1111} t_{1112} + \bar{t}_{1121} t_{1122} + \bar{t}_{1211} t_{1212} + \bar{t}_{1221} t_{1222} + \bar{t}_{2111} t_{2112} + \bar{t}_{2121} t_{2122} \nonumber \\ & \qquad + \bar{t}_{2211} t_{2212} + \bar{t}_{2221} t_{2222} = 0. \label{FourQubitsAllOrthogonality4}
\end{align}

We can formulate four concurrency of three lines from equations (\ref{FourQubitsAllOrthogonality1}) to (\ref{FourQubitsAllOrthogonality4}). By selecting $\bar{t}_{1111}$ and $t_{2222}$ as the pair of conjugate concurrent variables (CCV), the first iteration is given by
\begin{align}
\begin{vmatrix}
t_{2111} & \bar{t}_{1222} & c_1 \\
t_{1211} & \bar{t}_{2122} & c_2 \\
t_{1121} & \bar{t}_{2212} & c_3
\end{vmatrix} = 0, \label{ConcurrencyFourQubits1} \\
\begin{vmatrix}
t_{2111} & \bar{t}_{1222} & c_1 \\
t_{1211} & \bar{t}_{2122} & c_2 \\
t_{1112} & \bar{t}_{2221} & c_4
\end{vmatrix} = 0, \\
\begin{vmatrix}
t_{2111} & \bar{t}_{1222} & c_1 \\
t_{1121} & \bar{t}_{2212} & c_3 \\
t_{1112} & \bar{t}_{2221} & c_4
\end{vmatrix} = 0, \\
\begin{vmatrix}
t_{1211} & \bar{t}_{2122} & c_2 \\
t_{1121} & \bar{t}_{2212} & c_3 \\
t_{1112} & \bar{t}_{2221} & c_4
\end{vmatrix} = 0, \label{ConcurrencyFourQubits4}
\end{align}
where
\begin{align}
c_1 & = \bar{t}_{1112} t_{2112} + \bar{t}_{1121} t_{2121} + \bar{t}_{1122} t_{2122} + \bar{t}_{1211} t_{2211} + \bar{t}_{1212} t_{2212} + \bar{t}_{1221} t_{2221}, \\
c_2 & = \bar{t}_{2111} t_{2211} + \bar{t}_{1112} t_{1212} + \bar{t}_{2112} t_{2212} + \bar{t}_{1121} t_{1221} + \bar{t}_{2121} t_{2221} + \bar{t}_{1122} t_{1222}, \\
c_3 & = \bar{t}_{1211} t_{1221} + \bar{t}_{2111} t_{2121} + \bar{t}_{2211} t_{2221} + \bar{t}_{1112} t_{1122} + \bar{t}_{1212} t_{1222} + \bar{t}_{2112} t_{2122}, \\
c_4 & = \bar{t}_{1121} t_{1122} + \bar{t}_{1211} t_{1212} + \bar{t}_{1221} t_{1222} + \bar{t}_{2111} t_{2112} + \bar{t}_{2121} t_{2122} + \bar{t}_{2211} t_{2212}.
\end{align}

As mentioned in Section \ref{SubSecGeneralization}, we need to allow two rows to be zero in order to minimally satisfy the simultaneous concurrency of three lines from equations (\ref{ConcurrencyFourQubits1}) to (\ref{ConcurrencyFourQubits4}). We need to perform six row-checkings, i.e. rows $(1\text{-}2)$, $(1\text{-}3)$, $(1\text{-}4)$, $(2\text{-}3)$, $(2\text{-}4)$ and $(3\text{-}4)$. The number of column checking we need to perform is 3. To summarize, we performed a total of 73 of row and column checking across 3 iterations for four qubits. We used Mathematica to perform all the computational tasks.

Our results can be summarized as follows:-

\scriptsize
\begin{longtable}{|p{0.32\linewidth}|p{0.68\linewidth}|}
\caption{Special four-qubit core tensors due to the concurrency of three lines for all-orthogonality conditions} \\
\hline
Cases & States \\
\hline
1. $\sigma_1^{(1)2} \neq \sigma_1^{(2)2} \neq \sigma_1^{(3)2} \neq \sigma_1^{(4)2}$ & $\left| \psi \right\rangle = t_{1111} \left| 1111 \right\rangle + t_{1122} \left| 1122 \right\rangle + t_{1212} \left| 1212 \right\rangle + t_{1221} \left| 1221 \right\rangle + t_{2112} \left| 2112 \right\rangle$ \\ & \phantom{$\left| \psi \right\rangle =$ } $+ t_{2121} \left| 2121 \right\rangle + t_{2211} \left| 2211 \right\rangle + t_{2222} \left| 2222 \right\rangle$ \\
& $\left| \psi \right\rangle = t_{1112} \left| 1112 \right\rangle + t_{1121} \left| 1121 \right\rangle + t_{1211} \left| 1211 \right\rangle + t_{1222} \left| 1222 \right\rangle + t_{2111} \left| 2111 \right\rangle$ \\ & \phantom{$\left| \psi \right\rangle =$ } $+ t_{2122} \left| 2122 \right\rangle + t_{2212} \left| 2212 \right\rangle + t_{2221} \left| 2221 \right\rangle$ \\
& $\left| \psi \right\rangle = t_{1122} \left| 1122 \right\rangle + t_{1212} \left| 1212 \right\rangle + t_{1221} \left| 1221 \right\rangle + t_{2112} \left| 2112 \right\rangle + t_{2121} \left| 2121 \right\rangle$ \\ & \phantom{$\left| \psi \right\rangle =$ } $+ t_{2211} \left| 2211 \right\rangle$ \\
& $\left| \psi \right\rangle = t_{1111} \left| 1111 \right\rangle + t_{1122} \left| 1122 \right\rangle + t_{1212} \left| 1212 \right\rangle + t_{2112} \left| 2112 \right\rangle + t_{2221} \left| 2221 \right\rangle$ \\
& $\left| \psi \right\rangle = t_{1111} \left| 1111 \right\rangle + t_{1122} \left| 1122 \right\rangle + t_{1221} \left| 1221 \right\rangle + t_{2121} \left| 2121 \right\rangle + t_{2212} \left| 2212 \right\rangle$ \\
& $\left| \psi \right\rangle = t_{1111} \left| 1111 \right\rangle + t_{1212} \left| 1212 \right\rangle + t_{1221} \left| 1221 \right\rangle + t_{2122} \left| 2122 \right\rangle + t_{2211} \left| 2211 \right\rangle$ \\
& $\left| \psi \right\rangle = t_{1112} \left| 1112 \right\rangle + t_{1211} \left| 1211 \right\rangle + t_{1222} \left| 1222 \right\rangle + t_{2121} \left| 2121 \right\rangle + t_{2212} \left| 2212 \right\rangle$ \\
& $\left| \psi \right\rangle = t_{1121} \left| 1121 \right\rangle + t_{1212} \left| 1212 \right\rangle + t_{2111} \left| 2111 \right\rangle + t_{2122} \left| 2122 \right\rangle + t_{2221} \left| 2221 \right\rangle$ \\
& $\left| \psi \right\rangle = t_{1112} \left| 1112 \right\rangle + t_{1221} \left| 1221 \right\rangle + t_{2121} \left| 2121 \right\rangle + t_{2211} \left| 2211 \right\rangle + t_{2222} \left| 2222 \right\rangle$ \\
& $\left| \psi \right\rangle = t_{1121} \left| 1121 \right\rangle + t_{1212} \left| 1212 \right\rangle + t_{2112} \left| 2112 \right\rangle + t_{2211} \left| 2211 \right\rangle + t_{2222} \left| 2222 \right\rangle$ \\
& $\left| \psi \right\rangle = t_{1122} \left| 1122 \right\rangle + t_{1211} \left| 1211 \right\rangle + t_{2112} \left| 2112 \right\rangle + t_{2121} \left| 2121 \right\rangle + t_{2222} \left| 2222 \right\rangle$ \\
\hline
\pagebreak
\hline
2. $\sigma_1^{(i)2} = \sigma_1^{(j)2},\, i \neq j$ & (a) $\sigma_1^{(1)2} = \sigma_1^{(2)2}$ \\ & \phantom{(a)} $\left| \psi \right\rangle = t_{1111} \left| 1111 \right\rangle + t_{1112} \left| 1112 \right\rangle + t_{1121} \left| 1121 \right\rangle + t_{1122} \left| 1122 \right\rangle$ \\ & \phantom{(a) $\left| \psi \right\rangle =$ } $+ t_{2211} \left| 2211 \right\rangle + t_{2212} \left| 2212 \right\rangle + t_{2221} \left| 2221 \right\rangle + t_{2222} \left| 2222 \right\rangle,$ \\ & \phantom{(a)} $\bar{t}_{1111} t_{1121} + \bar{t}_{1112} t_{1122} + \bar{t}_{2211} t_{2221} + \bar{t}_{2212} t_{2222} = 0,$ \\ & \phantom{(a)} $\bar{t}_{1111} t_{1112} + \bar{t}_{1121} t_{1122} + \bar{t}_{2211} t_{2212} + \bar{t}_{2221} t_{2222} = 0$ \\
\cline{2-2}
& (b) $\sigma_1^{(1)2} = \sigma_1^{(3)2}$ \\ & \phantom{(b)} $\left| \psi \right\rangle = t_{1111} \left| 1111 \right\rangle + t_{1112} \left| 1112 \right\rangle + t_{1211} \left| 1211 \right\rangle + t_{1212} \left| 1212 \right\rangle$ \\ & \phantom{(b) $\left| \psi \right\rangle =$ } $+ t_{2121} \left| 2121 \right\rangle + t_{2122} \left| 2122 \right\rangle + t_{2221} \left| 2221 \right\rangle + t_{2222} \left| 2222 \right\rangle,$ \\ & \phantom{(b)} $\bar{t}_{1111} t_{1211} + \bar{t}_{1112} t_{1212} + \bar{t}_{2121} t_{2221} + \bar{t}_{2122} t_{2222} = 0,$ \\ & \phantom{(b)} $\bar{t}_{1111} t_{1112} + \bar{t}_{1211} t_{1212} + \bar{t}_{2121} t_{2122} + \bar{t}_{2221} t_{2222} = 0$ \\
\cline{2-2}
& (c) $\sigma_1^{(1)2} = \sigma_1^{(4)2}$ \\ & \phantom{(c)} $\left| \psi \right\rangle = t_{1111} \left| 1111 \right\rangle + t_{1121} \left| 1121 \right\rangle + t_{1211} \left| 1211 \right\rangle + t_{1221} \left| 1221 \right\rangle$ \\ & \phantom{(c) $\left| \psi \right\rangle =$ } $+ t_{2112} \left| 2112 \right\rangle + t_{2122} \left| 2122 \right\rangle + t_{2212} \left| 2212 \right\rangle + t_{2222} \left| 2222 \right\rangle,$ \\ & \phantom{(c)} $\bar{t}_{1111} t_{1211} + \bar{t}_{1121} t_{1221} + \bar{t}_{2112} t_{2212} + \bar{t}_{2122} t_{2222} = 0,$ \\ & \phantom{(c)} $\bar{t}_{1111} t_{1121} + \bar{t}_{1211} t_{1221} + \bar{t}_{2112} t_{2122} + \bar{t}_{2212} t_{2222} = 0$ \\
\cline{2-2}
& (d) $\sigma_1^{(2)2} = \sigma_1^{(3)2}$ \\ & \phantom{(d)} $\left| \psi \right\rangle = t_{1111} \left| 1111 \right\rangle + t_{1112} \left| 1112 \right\rangle + t_{1221} \left| 1221 \right\rangle + t_{1222} \left| 1222 \right\rangle$ \\ & \phantom{(d) $\left| \psi \right\rangle =$ } $+ t_{2111} \left| 2111 \right\rangle + t_{2112} \left| 2112 \right\rangle + t_{2221} \left| 2221 \right\rangle + t_{2222} \left| 2222 \right\rangle,$ \\ & \phantom{(d)} $\bar{t}_{1111} t_{2111} + \bar{t}_{1112} t_{2112} + \bar{t}_{1221} t_{2221} + \bar{t}_{1222} t_{2222} = 0,$ \\ & \phantom{(d)} $\bar{t}_{1111} t_{1112} + \bar{t}_{1221} t_{1222} + \bar{t}_{2111} t_{2112} + \bar{t}_{2221} t_{2222} = 0$ \\
\cline{2-2}
& (e) $\sigma_1^{(2)2} = \sigma_1^{(4)2}$ \\ & \phantom{(e)} $\left| \psi \right\rangle = t_{1111} \left| 1111 \right\rangle + t_{1121} \left| 1121 \right\rangle + t_{1212} \left| 1212 \right\rangle + t_{1222} \left| 1222 \right\rangle$ \\ & \phantom{(e) $\left| \psi \right\rangle =$ } $+ t_{2111} \left| 2111 \right\rangle + t_{2121} \left| 2121 \right\rangle + t_{2212} \left| 2212 \right\rangle + t_{2222} \left| 2222 \right\rangle,$ \\ & \phantom{(e)} $\bar{t}_{1111} t_{2111} + \bar{t}_{1121} t_{2121} + \bar{t}_{1212} t_{2212} + \bar{t}_{1222} t_{2222} = 0,$ \\ & \phantom{(e)} $\bar{t}_{1111} t_{1121} + \bar{t}_{1212} t_{1222} + \bar{t}_{2111} t_{2121} + \bar{t}_{2212} t_{2222} = 0$ \\
\cline{2-2}
& (f) $\sigma_1^{(3)2} = \sigma_1^{(4)2}$ \\ & \phantom{(f)} $\left| \psi \right\rangle = t_{1111} \left| 1111 \right\rangle + t_{1122} \left| 1122 \right\rangle + t_{1211} \left| 1211 \right\rangle + t_{1222} \left| 1222 \right\rangle$ \\ & \phantom{(f) $\left| \psi \right\rangle =$ } $+ t_{2111} \left| 2111 \right\rangle + t_{2122} \left| 2122 \right\rangle + t_{2211} \left| 2211 \right\rangle + t_{2222} \left| 2222 \right\rangle,$ \\ & \phantom{(f)} $\bar{t}_{1111} t_{2111} + \bar{t}_{1122} t_{2122} + \bar{t}_{1211} t_{2211} + \bar{t}_{1222} t_{2222} = 0,$ \\ & \phantom{(f)} $\bar{t}_{1111} t_{1211} + \bar{t}_{1122} t_{1222} + \bar{t}_{2111} t_{2211} + \bar{t}_{2122} t_{2222} = 0$ \\
\hline
3. $\sigma_1^{(i)2} = \sigma_1^{(j)2},\, \sigma_1^{(k)2} = \sigma_1^{(l)2},$ & (a) $\sigma_1^{(1)2} = \sigma_1^{(2)2},\, \sigma_1^{(3)2} = \sigma_1^{(4)2}$ \\ \phantom{3.} $i \neq j \neq k \neq l$ & \phantom{(a)} $\left| \psi \right\rangle = t_{1111} \left| 1111 \right\rangle + t_{1122} \left| 1122 \right\rangle + t_{2211} \left| 2211 \right\rangle + t_{2222} \left| 2222 \right\rangle$ \\
\cline{2-2}
& (b) $\sigma_1^{(1)2} = \sigma_1^{(3)2},\, \sigma_1^{(2)2} = \sigma_1^{(4)2}$ \\ & \phantom{(b)} $\left| \psi \right\rangle = t_{1111} \left| 1111 \right\rangle + t_{1212} \left| 1212 \right\rangle + t_{2121} \left| 2121 \right\rangle + t_{2222} \left| 2222 \right\rangle$ \\
\cline{2-2}
& (c) $\sigma_1^{(1)2} = \sigma_1^{(4)2},\, \sigma_1^{(2)2} = \sigma_1^{(3)2}$ \\ & \phantom{(c)} $\left| \psi \right\rangle = t_{1111} \left| 1111 \right\rangle + t_{1221} \left| 1221 \right\rangle + t_{2112} \left| 2112 \right\rangle + t_{2222} \left| 2222 \right\rangle$ \\
\hline
4. $\sigma_1^{(i)2} = \sigma_1^{(j)2} = \sigma_1^{(k)2},$ & (a) $\sigma_1^{(1)2} = \sigma_1^{(2)2} = \sigma_1^{(3)2}$ \\ \phantom{4.} $i \neq j \neq k$ & \phantom{(a)} $\left| \psi \right\rangle = t_{1111} \left| 1111 \right\rangle + t_{1112} \left| 1112 \right\rangle + t_{2221} \left| 2221 \right\rangle + t_{2222} \left| 2222 \right\rangle,$ \\ & \phantom{(a)} $\bar{t}_{1112} t_{1111} + \bar{t}_{2222} t_{2221} = 0$ \\
\cline{2-2}
& (b) $\sigma_1^{(1)2} = \sigma_1^{(2)2} = \sigma_1^{(4)2}$ \\ & \phantom{(b)} $\left| \psi \right\rangle = t_{1111} \left| 1111 \right\rangle + t_{1121} \left| 1121 \right\rangle + t_{2212} \left| 2212 \right\rangle + t_{2222} \left| 2222 \right\rangle,$ \\ & \phantom{(b)} $\bar{t}_{1111} t_{1121} + \bar{t}_{2212} t_{2222} = 0$ \\
\cline{2-2}
& (c) $\sigma_1^{(1)2} = \sigma_1^{(3)2} = \sigma_1^{(4)2}$ \\ & \phantom{(c)} $\left| \psi \right\rangle = t_{1111} \left| 1111 \right\rangle + t_{1211} \left| 1211 \right\rangle + t_{2122} \left| 2122 \right\rangle + t_{2222} \left| 2222 \right\rangle,$ \\ & \phantom{(c)} $\bar{t}_{1111} t_{1211} + \bar{t}_{2122} t_{2222} = 0$ \\
\cline{2-2}
& (d) $\sigma_1^{(2)2} = \sigma_1^{(3)2} = \sigma_1^{(4)2}$ \\ & \phantom{(d)} $\left| \psi \right\rangle = t_{1111} \left| 1111 \right\rangle + t_{1222} \left| 1222 \right\rangle + t_{2111} \left| 2111 \right\rangle + t_{2222} \left| 2222 \right\rangle,$ \\ & \phantom{(d)} $\bar{t}_{1111} t_{2111} + \bar{t}_{1222} t_{2222} = 0$ \\
\hline
\end{longtable}
\normalsize

As expected, we did not recover the generalized GHZ states of four qubits by using this approach, however it is recognized as
\begin{align}
\left| \text{GHZ} \right\rangle = t_{1111} \left| 1111 \right\rangle + t_{2222} \left| 2222 \right\rangle,
\end{align}
with the first $n$-mode singular values $\sigma_1^{(1)2} = \sigma_1^{(2)2} = \sigma_1^{(3)2} = \sigma_1^{(4)2}$.

\section{Conclusion}

In this work, we discussed how the matrix unfolding of a multipartite state is related to its reduced density matrices, and how the higher order singular value decomposition (HOSVD) is related to the simultaneous diagonalization of one-body reduced density matrices (RDM). While these results are not new and have been discussed in the past (for example, the $A \text{-} BC$, $B \text{-} AC$ and $C \text{-} AB$ bipartite decomposition of three qubits \cite{Albeverio2005}; tensor flattening \cite{Gharani2021}; trace decomposition \cite{Kraus2010A,Kraus2010B}; simultaneous diagonalization of one-body RDM due to the momentum map and Cartan subalgebra \cite{Sawicki2011A}), we showed these results from the perspectives of matrix unfolding and HOSVD.

From our previous work \cite{Choong2020}, we solved for the solutions to the set of all-orthogonality conditions for three qubits and obtained some results equivalent to the local unitary (LU) classification of three qubits \cite{Carteret2000}. Stemming from the same methodology, we proposed a simpler coarse-grained method to identify special multi-qubit core tensors by using the concurrency of three lines. A detailed study on the special core tensors based on their entanglement and geometrical properties is an interesting future direction we wish to pursue.

\section{Acknowledgement}

This research was supported by Fundamental Research Grant Scheme (FRGS) funded by Ministry of Higher Education of Malaysia with reference code FRGS/1/2019/STG02/UPM/02/3. The first author is sponsored by Ministry of Education (MOE) under MyBrainSc.

\bibliographystyle{ieeetr}
\bibliography{PaperRef}

\newpage
\appendix
\section{Proofs to Proposition \ref{MatrixUnfoldingRDMProposition} and Theorem \ref{HOSVDReducedDensityMatrixTheorem}}

\subsection{Proposition \ref{MatrixUnfoldingRDMProposition}} \label{AppendixProofMatrixUnfolding}
\setcounter{proposition}{0}
\begin{proposition}[Matrix unfolding and reduced density matrices]
The $n$-th matrix unfolding $\Psi_{(n)}$ of an $N$-th order tensor $\Psi \in \mathbb{C}^{I_1} \otimes \ldots \otimes \mathbb{C}^{I_n} \otimes \ldots \otimes \mathbb{C}^{I_N}$ is related to its one-body and $(n-1)$-body reduced density matrices through the following relations respectively,
\begin{align}
\Psi_{(n)} \Psi_{(n)}^\dagger & = \rho_n, \\
\Psi_{(n)}^\text{T} \bar{\Psi}_{(n)} & = \rho_{n+1\, \ldots N\, 1\, \ldots n-1}.
\end{align}
\end{proposition}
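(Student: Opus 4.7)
The plan is to prove both identities by direct computation in bra-ket notation, using Definition \ref{MatrixUnfoldingAlternative}, and then recognizing the resulting double sum as the partial trace form of the appropriate reduced density matrix. The key tool throughout is orthonormality of the computational basis, which collapses the middle indices to Kronecker deltas.

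For the first identity, I would start by writing
\begin{align*}
\Psi_{(n)}^\dagger = \sum_{j_1 \ldots j_N} \bar{\psi}_{j_1 \ldots j_N} \left| j_{n+1} \ldots j_N j_1 \ldots j_{n-1} \right\rangle \left\langle j_n \right|,
\end{align*}
and then multiply with $\Psi_{(n)}$ from Definition \ref{MatrixUnfoldingAlternative}. The inner bra-ket $\langle i_{n+1} \ldots i_N i_1 \ldots i_{n-1} | j_{n+1} \ldots j_N j_1 \ldots j_{n-1} \rangle$ produces a product of Kronecker deltas that identifies $i_k = j_k$ for all $k \neq n$. After contracting, the only free indices are $i_n$ and $j_n$, and the coefficient is precisely the sum $\sum_{i_1 \ldots \widehat{i_n} \ldots i_N} \psi_{i_1 \ldots i_n \ldots i_N} \bar{\psi}_{i_1 \ldots j_n \ldots i_N}$, which is the matrix element $\langle i_n | \rho_n | j_n \rangle$ of the one-body reduced density matrix obtained by tracing out every subsystem except the $n$-th.

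For the second identity, I would similarly write
\begin{align*}
\Psi_{(n)}^{\text{T}} = \sum_{i_1 \ldots i_N} \psi_{i_1 \ldots i_N} \left| i_{n+1} \ldots i_N i_1 \ldots i_{n-1} \right\rangle \left\langle i_n \right|,
\end{align*}
and $\bar{\Psi}_{(n)}$ analogously. Forming the product $\Psi_{(n)}^{\text{T}} \bar{\Psi}_{(n)}$, this time the inner bra-ket $\langle i_n | j_n \rangle = \delta_{i_n j_n}$ collapses only the $n$-th index. What remains is a sum over $i_n$ (the traced-out subsystem) of $\psi_{i_1 \ldots i_n \ldots i_N} \bar{\psi}_{j_1 \ldots i_n \ldots j_N}$ times $|i_{n+1} \ldots i_{n-1}\rangle \langle j_{n+1} \ldots j_{n-1}|$, which is exactly the matrix element of $\rho_{n+1 \ldots N\, 1 \ldots n-1}$ in the cyclically ordered basis dictated by the unfolding convention.

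I do not anticipate a serious obstacle here; the whole argument is bookkeeping. The one place that demands care is matching the cyclic ordering $(n+1, \ldots, N, 1, \ldots, n-1)$ of the composite basis labels coming out of Definition \ref{MatrixUnfoldingAlternative} with the conventional labeling of the $(N-1)$-body RDM, so that the right-hand side of \eqref{MatrixUnfoldingMinusOneRDM} is identified as $\rho_{n+1 \ldots N\, 1 \ldots n-1}$ rather than some other permutation. A secondary subtlety is that transposition (not Hermitian conjugation) appears in the second identity, which is what allows the surviving factors to be $\psi \bar{\psi}$ in the configuration appropriate to a density matrix rather than to an inner product; I would flag this explicitly to avoid confusion between the two cases.
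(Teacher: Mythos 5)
Your proposal is correct and is essentially the same argument as the paper's: both proofs are direct index bookkeeping in bra-ket notation, where orthonormality of the computational basis produces the Kronecker deltas that identify the contracted sums with the partial-trace expressions for $\rho_n$ and $\rho_{n+1\,\ldots N\,1\,\ldots n-1}$. The only cosmetic difference is direction — the paper starts from the partial trace and a cyclic permutation $P_\pi$ to reach the unfolding products, whereas you start from the products and read off the RDMs, with the cyclic ordering already built into the unfolding's column labels — but the computation is identical.
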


\begin{proof}
First, we generalize the partial trace operation to multipartite states
\begin{align}
& \text{Tr}_{n} (\left| i_1 \ldots i_N \right\rangle \left\langle j_1 \ldots j_N \right|) \nonumber \\ & \quad = \left| i_1 \ldots i_{n-1} i_{n+1} \ldots i_N \right\rangle \left\langle j_1 \ldots j_{n-1} j_{n+1} \ldots j_N \right| \text{Tr} (\left| i_n \right\rangle \left\langle j_n \right|) \nonumber \\ & \quad = \left| i_1 \ldots i_{n-1} i_{n+1} \ldots i_N \right\rangle \left\langle j_1 \ldots j_{n-1} j_{n+1} \ldots j_N \right| \langle i_n | i_n \rangle \langle j_n | i_n \rangle \nonumber \\ & \quad = \langle j_n | i_n \rangle \left| i_1 \ldots i_{n-1} i_{n+1} \ldots i_N \right\rangle \left\langle j_1 \ldots j_{n-1} j_{n+1} \ldots j_N \right| \nonumber \\ & \quad = \delta_{i_n j_n} \left| i_1 \ldots i_{n-1} i_{n+1} \ldots i_N \right\rangle \left\langle j_1 \ldots j_{n-1} j_{n+1} \ldots j_N \right| \label{NPartialTrace}
\end{align}
such that the $(n-1)$-body reduced density matrix is given by
\begin{align*}
& \rho_{1\, \ldots n-1\, n+1 \ldots N} \\ & \quad = \sum_{\mathcal{I}} \sum_{\mathcal{J}} \psi_{i_1 \ldots i_n \ldots i_N} \bar{\psi}_{j_1 \ldots i_n \ldots j_N} \left| i_1 \ldots i_{n-1} i_{n+1} \ldots i_N \right\rangle \left\langle j_1 \ldots j_{n-1} j_{n+1} \ldots j_N \right|,
\end{align*}
where $\mathcal{I}$ and $\mathcal{J}$ are the index sets.

Permutation matrices can act on the $(n-1)$-body reduced density matrix so that the labeling of qubits can be rearranged. Particularly, we want a cyclic permutation in such a way that qubits labeled 1 to $n-1$ are permuted to the back,
\begin{align*}
& P_\pi \left( \rho_{1\, \ldots n-1\, n+1 \ldots N} \right) P^\text{T}_\pi \nonumber \\ & \, = \rho_{n+1\, \ldots N\, 1\, \ldots n-1} \nonumber \\ & \, = \sum_{\mathcal{I}} \sum_{\mathcal{J}} \psi_{i_1 \ldots i_n \ldots i_N} \bar{\psi}_{j_1 \ldots i_n \ldots j_N} P_\pi \left| i_1 \ldots i_{n-1} i_{n+1} \ldots i_N \right\rangle \left\langle j_1 \ldots j_{n-1} j_{n+1} \ldots j_N \right| P^\text{T}_\pi \nonumber \\ & \, = \sum_{\mathcal{I}} \sum_{\mathcal{J}} \psi_{i_1 \ldots i_n \ldots i_N} \bar{\psi}_{j_1 \ldots i_n \ldots j_N} \left| i_{n+1} \ldots i_N i_1 \ldots i_{n-1} \right\rangle \left\langle j_{n+1} \ldots j_N j_1 \ldots j_{n-1} \right| \nonumber \\ & \, = \Psi_{(n)}^\text{T} \bar{\Psi}_{(n)},
\end{align*}
where $P_\pi$ is the matrix for the desired permutation and $\Psi_{(n)}$ is the $n$-th matrix unfolding of $\Psi$.

In addition, one can perform partial trace operation $N-1$ times on $N$-partite states to obtain a set of one-body reduced density matrices. From equation (\ref{NPartialTrace}), every time an $n$-partial trace operation is performed, a Kronecker delta $\delta_{i_n j_n}$ will be produced. Thus, the one-body reduced density matrices will have the following generic form,
\begin{align}
& \rho_{n} = \sum_{\mathcal{I}} \sum_{\mathcal{J}} \psi_{i_1 \ldots i_n \ldots i_N} \bar{\psi}_{i_1 \ldots j_n \ldots i_N} \left| i_n \right\rangle \left\langle j_n \right| = \Psi_{(n)} \Psi_{(n)}^\dagger. \label{NonebodyRDM}
\end{align}
\end{proof}

\subsection{Theorem \ref{HOSVDReducedDensityMatrixTheorem}} \label{AppendixProofHOSVD}
\setcounter{theorem}{2}
\begin{theorem}[HOSVD and one-body reduced density matrices]
Let $\Psi \in \mathbb{C}^{I_1} \otimes \ldots \otimes \mathbb{C}^{I_n} \otimes \ldots \otimes \mathbb{C}^{I_N}$ be an $N$th-order complex tensor and $\mathcal{T}$ be its core tensor. HOSVD simultaneously diagonalizes the set of one-body reduced density matrices of multipartite states in such a way that the $n$-mode singular values are ordered.
\end{theorem}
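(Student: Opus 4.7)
The plan is to combine Proposition \ref{MatrixUnfoldingRDMProposition} with Theorem \ref{HOSVDMatrixUnfoldingTheorem} to push the HOSVD factorization through the one-body RDM formula. By Proposition \ref{MatrixUnfoldingRDMProposition}, the $n$-th one-body RDM of $\Psi$ is $\rho_n = \Psi_{(n)} \Psi_{(n)}^\dagger$; by Theorem \ref{HOSVDMatrixUnfoldingTheorem}, the matrix unfolding factorizes as $\Psi_{(n)} = U^{(n)} T_{(n)} V^T$ where $V := U^{(n+1)} \otimes \cdots \otimes U^{(n-1)}$ is a Kronecker product of unitaries. Substituting yields
\begin{align*}
\rho_n = U^{(n)} T_{(n)} V^T \bar{V} T_{(n)}^\dagger U^{(n)\dagger}.
\end{align*}
The first key step is to simplify $V^T \bar{V}$. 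Since the Kronecker product of unitaries is unitary, $V^\dagger V = \bar{V}^T V = I$, and transposing gives $V^T \bar{V} = I$. The RDM collapses to $\rho_n = U^{(n)} ( T_{(n)} T_{(n)}^\dagger ) U^{(n)\dagger}$, and applying Proposition \ref{MatrixUnfoldingRDMProposition} a second time to $\mathcal{T}$ yields $T_{(n)} T_{(n)}^\dagger = \rho_n^{\mathcal{T}}$, the $n$-th one-body RDM of the core tensor. Hence $U^{(n)\dagger} \rho_n U^{(n)} = \rho_n^{\mathcal{T}}$ holds for every $n$ simultaneously, which is exactly the simultaneous-diagonalization claim once we verify that $\rho_n^{\mathcal{T}}$ is diagonal.

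The second key step is to read off the entries of $\rho_n^{\mathcal{T}}$ from Definition \ref{MatrixUnfoldingAlternative}, which gives
\begin{align*}
(\rho_n^{\mathcal{T}})_{\alpha \beta} = \sum_{i_1, \ldots, i_{n-1}, i_{n+1}, \ldots, i_N} t_{i_1 \ldots i_{n-1} \alpha i_{n+1} \ldots i_N} \, \bar{t}_{i_1 \ldots i_{n-1} \beta i_{n+1} \ldots i_N},
\end{align*}
i.e., the complex conjugate of the inner product $\langle \mathcal{T}_{i_n = \alpha}, \mathcal{T}_{i_n = \beta} \rangle$ appearing in the all-orthogonality condition of Theorem \ref{HOSVDTheorem}. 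The off-diagonal entries therefore vanish, establishing diagonality and simultaneously confirming the side remark that the all-orthogonality conditions are precisely the off-diagonal entries of the $\rho_n$ expressed in HOSVD coordinates. The diagonal entries reduce to $|\mathcal{T}_{i_n = \alpha}|^2 = (\sigma_\alpha^{(n)})^2$, and the nonincreasing order is inherited directly from the ordering property of Theorem \ref{HOSVDTheorem}.

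I expect the main obstacle to be bookkeeping around the identity $V^T \bar{V} = I$. The matrix unfolding factorization in Theorem \ref{HOSVDMatrixUnfoldingTheorem} uses a plain transpose rather than a Hermitian conjugate on the trailing Kronecker product, so Proposition \ref{MatrixUnfoldingRDMProposition} introduces the awkward combination $V^T \bar{V}$ instead of the more familiar $V V^\dagger$. The simplification works cleanly only because $V$ is unitary so that $V^\dagger V = I$ can be transposed to $V^T \bar{V} = I$; keeping the conjugates and transposes straight, and noting that a Kronecker product of unitaries is itself unitary, is the only genuinely delicate part of the argument.
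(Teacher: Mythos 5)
Your proposal is correct and follows essentially the same route as the paper's proof: combine Proposition \ref{MatrixUnfoldingRDMProposition} with Theorem \ref{HOSVDMatrixUnfoldingTheorem} to get $\rho_n = U^{(n)} T_{(n)} T_{(n)}^\dagger U^{(n)\dagger}$, then identify the off-diagonal entries of $T_{(n)} T_{(n)}^\dagger$ with the all-orthogonality conditions and the diagonal entries with the ordered squared $n$-mode singular values. You are in fact more careful than the paper, which silently absorbs the $V^T\bar V = I$ cancellation that you spell out explicitly.
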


\begin{proof}
From equation (\ref{NonebodyRDM}), the summation of the two index sets $\mathcal{I}$ and $\mathcal{J}$ is between two subtensors $\Psi_{i_n}$ and $\Psi_{j_n}$. Due to Theorem \ref{HOSVDMatrixUnfoldingTheorem}, we can write
\begin{align}
\Psi_{(n)} \Psi_{(n)}^\dagger = \rho_n = U^{(n)} T_{(n)} T_{(n)}^\dagger U^{(n)\dagger} = U^{(n)} \rho_n^d U^{(n)\dagger},
\end{align}
where $\rho_n^d = T_{(n)} T_{(n)}^\dagger$ is the $n$-th diagonalized one-body reduced density matrix. The one-body reduced density matrix is diagonalized because when $i_n = j_n$, we obtain the square of $n$-mode singular values, $\sigma_i^{(n)2}$, whereas when $i_n \neq j_n$, we have the all-orthogonality conditions, which are zero due to HOSVD.
\end{proof}

\section{Special three-qubit core tensors by concurrency of three lines} \label{AppendixConcurrencyThreeQubits}

By definition, the all-orthogonality conditions of three qubits are
\begin{align}
\bar{t}_{111} t_{211} + \bar{t}_{121} t_{221} + \bar{t}_{112} t_{212} + \bar{t}_{122} t_{222} & = 0, \\
\bar{t}_{111} t_{121} + \bar{t}_{211} t_{221} + \bar{t}_{112} t_{122} + \bar{t}_{212} t_{222} & = 0, \\
\bar{t}_{111} t_{112} + \bar{t}_{211} t_{212} + \bar{t}_{121} t_{122} + \bar{t}_{221} t_{222} & = 0.
\end{align}
By writing $\bar{t}_{111}$ and $t_{222}$ in terms of other unknowns, we can reformulate the solutions to the above all-orthogonality conditions in the form of concurrency of three lines,
\begin{align}
\begin{vmatrix}
t_{211} & \bar{t}_{122} & \bar{t}_{121} t_{221} + \bar{t}_{112} t_{212} \\
t_{121} & \bar{t}_{212} & \bar{t}_{211} t_{221} + \bar{t}_{112} t_{122} \\
t_{112} & \bar{t}_{221} & \bar{t}_{211} t_{212} + \bar{t}_{121} t_{122}
\end{vmatrix} = 0,
\end{align}
where $(x,\, y) = (\bar{t}_{111}, t_{222})$. Now, we study all possible solutions to the above determinant.

\refstepcounter{subsection}
\subsection*{\thesubsection \quad Column 1 = 0: $t_{112} = t_{121} = t_{211} = 0$}

We have
\begin{align}
\bar{t}_{122} t_{222} & = 0, \\
\bar{t}_{212} t_{222} & = 0, \\
\bar{t}_{221} t_{222} & = 0.
\end{align}
Let $t_{222} = 0$, we have $\left| \text{B}_1 \right\rangle = t_{111} \left| 111 \right\rangle + t_{122} \left| 122 \right\rangle + t_{212} \left| 212 \right\rangle + t_{221} \left| 221 \right\rangle$.

\refstepcounter{subsection}
\subsection*{\thesubsection \quad Column 2 = 0: $t_{122} = t_{212} = t_{221} = 0$}

We have
\begin{align}
\bar{t}_{111} t_{211} & = 0, \\
\bar{t}_{111} t_{121} & = 0, \\
\bar{t}_{111} t_{112} & = 0.
\end{align}
Let $t_{111} = 0$, we have $\left| \text{B}_2 \right\rangle = t_{112} \left| 112 \right\rangle + t_{121} \left| 121 \right\rangle + t_{211} \left| 211 \right\rangle + t_{222} \left| 222 \right\rangle$.

\refstepcounter{subsection}
\subsection*{\thesubsection \quad Column 3 = 0}

We have
\begin{align}
\bar{t}_{121} t_{221} + \bar{t}_{112} t_{212} & = 0, \label{Column3|1} \\
\bar{t}_{211} t_{221} + \bar{t}_{112} t_{122} & = 0, \label{Column3|2} \\
\bar{t}_{211} t_{212} + \bar{t}_{121} t_{122} & = 0, \label{Column3|3}
\end{align}
in addition to the original all-orthogonality conditions that have to be satisfied, which are reduced to
\begin{align}
\bar{t}_{111} t_{211} + \bar{t}_{122} t_{222} & = 0, \label{Column3|4} \\
\bar{t}_{111} t_{121} + \bar{t}_{212} t_{222} & = 0, \label{Column3|5} \\
\bar{t}_{111} t_{112} + \bar{t}_{221} t_{222} & = 0. \label{Column3|6}
\end{align}

From equations (\ref{Column3|4}) and (\ref{Column3|6}), since we are looking for minimum requirements to satisfy the set of equations, we have to let $t_{111} = t_{222} = 0$. From equations (\ref{Column3|1}) and (\ref{Column3|2}), we have
\begin{align}
\frac{\bar{t}_{112}}{t_{221}} = - \frac{\bar{t}_{121}}{t_{212}} = - \frac{\bar{t}_{211}}{t_{122}},
\end{align}
but
\begin{align}
\frac{\bar{t}_{121}}{t_{212}} = - \frac{\bar{t}_{211}}{t_{122}}
\end{align}
from equation (\ref{Column3|3}). In order to resolve this contradiction, we consider the following possibilities:-
\begin{enumerate}
\item $t_{112} = t_{221} = 0$
\newline
$\left| \psi \right\rangle = t_{121} \left| 121 \right\rangle + t_{122} \left| 122 \right\rangle + t_{211} \left| 211 \right\rangle + t_{212} \left| 212 \right\rangle,\, \bar{t}_{211} t_{212} + \bar{t}_{121} t_{122} = 0$;
\newline
$\sigma_1^{(1)2} = \sigma_1^{(3)2} = \left| t_{121} \right|^2 + \left| t_{122} \right|^2,\, \sigma_1^{(2)2} = \sigma_2^{(1)2} = \left| t_{211} \right|^2 + \left| t_{212} \right|^2$.
\newline
From the ordering property of higher order singular value decomposition, since $\sigma_1^{(2)2}$ is the largest 2-mode singular value, the only way $\sigma_1^{(2)2} = \sigma_2^{(1)2}$ can be satisfied is when $\sigma_1^{(2)2} = \frac{1}{2}$, resulting to $\sigma_1^{(1)2} = \sigma_1^{(2)2} = \sigma_1^{(3)2} = \frac{1}{2}$. This is not a generic special state that we are looking for.

\item $t_{121} = t_{212} = 0$
\newline
$\left| \psi \right\rangle = t_{112} \left| 112 \right\rangle + t_{122} \left| 122 \right\rangle + t_{211} \left| 211 \right\rangle + t_{221} \left| 221 \right\rangle,\, \bar{t}_{211} t_{221} + \bar{t}_{112} t_{122} = 0$;
\newline
$\sigma_1^{(1)2} = \sigma_1^{(2)2} = \left| t_{112} \right|^2 + \left| t_{122} \right|^2,\, \sigma_1^{(3)2} = \sigma_2^{(1)2} = \left| t_{211} \right|^2 + \left| t_{221} \right|^2$.
\newline
From the ordering property of higher order singular value decomposition, since $\sigma_1^{(3)2}$ is the largest 3-mode singular value, the only way $\sigma_1^{(3)2} = \sigma_2^{(1)2}$ can be satisfied is when $\sigma_1^{(3)2} = \frac{1}{2}$, resulting to $\sigma_1^{(1)2} = \sigma_1^{(2)2} = \sigma_1^{(3)2} = \frac{1}{2}$. This is not a generic special state that we are looking for.

\item $t_{122} = t_{211} = 0$
\newline
$\left| \psi \right\rangle = t_{112} \left| 112 \right\rangle + t_{121} \left| 121 \right\rangle + t_{212} \left| 212 \right\rangle + t_{221} \left| 221 \right\rangle,\, \bar{t}_{121} t_{221} + \bar{t}_{112} t_{212} = 0$;
\newline
$\sigma_1^{(1)2} = \sigma_1^{(2)2} = \left| t_{112} \right|^2 + \left| t_{121} \right|^2,\, \sigma_1^{(3)2} = \sigma_2^{(2)2} = \left| t_{121} \right|^2 + \left| t_{221} \right|^2$.
\newline
From the ordering property of higher order singular value decomposition, since $\sigma_1^{(3)2}$ is the largest 3-mode singular value, the only way $\sigma_1^{(3)2} = \sigma_2^{(2)2}$ can be satisfied is when $\sigma_1^{(3)2} = \frac{1}{2}$, resulting to $\sigma_1^{(1)2} = \sigma_1^{(2)2} = \sigma_1^{(3)2} = \frac{1}{2}$. This is not a generic special state that we are looking for.
\end{enumerate}

\refstepcounter{subsection}
\subsection*{\thesubsection \quad Row 1 = 0}

We have
\begin{align}
\bar{t}_{112} t_{212} + \bar{t}_{121} t_{221} & = 0, \label{Row1|1} \\
\bar{t}_{111} t_{121} + \bar{t}_{212} t_{222} & = 0, \label{Row1|2} \\
\bar{t}_{111} t_{112} + \bar{t}_{221} t_{222} & = 0. \label{Row1|3}
\end{align}

From equations (\ref{Row1|2}) and (\ref{Row1|3}), we have
\begin{align}
\frac{\bar{t}_{111}}{t_{222}} = - \frac{\bar{t}_{212}}{t_{121}} = - \frac{\bar{t}_{221}}{t_{112}},
\end{align}
but
\begin{align}
\frac{\bar{t}_{212}}{t_{121}} = - \frac{\bar{t}_{221}}{t_{112}}
\end{align}
from equation (\ref{Row1|1}). In order to resolve this contradiction, we consider the following possibilities:-
\begin{enumerate}
\item $t_{111} = t_{222} = 0$
\newline
$\left| \psi \right\rangle = t_{112} \left| 112 \right\rangle + t_{121} \left| 121 \right\rangle + t_{212} \left| 212 \right\rangle + t_{221} \left| 221 \right\rangle,\, \bar{t}_{121} t_{221} + \bar{t}_{112} t_{212} = 0$;
\newline
$\sigma_1^{(1)2} = \sigma_1^{(2)2} = \left| t_{112} \right|^2 + \left| t_{121} \right|^2,\, \sigma_1^{(3)2} = \sigma_2^{(2)2} = \left| t_{121} \right|^2 + \left| t_{221} \right|^2$.
\newline
From the ordering property of higher order singular value decomposition, since $\sigma_1^{(3)2}$ is the largest 3-mode singular value, the only way $\sigma_1^{(3)2} = \sigma_2^{(2)2}$ can be satisfied is when $\sigma_1^{(3)2} = \frac{1}{2}$, resulting to $\sigma_1^{(1)2} = \sigma_1^{(2)2} = \sigma_1^{(3)2} = \frac{1}{2}$. This is not a generic special state that we are looking for.

\item $t_{112} = t_{221} = 0$
\newline
$\left| \text{S}_2 \right\rangle = t_{111} \left| 111 \right\rangle + t_{121} \left| 121 \right\rangle + t_{212} \left| 212 \right\rangle + t_{222} \left| 222 \right\rangle,\, \bar{t}_{111} t_{121} + \bar{t}_{212} t_{222} = 0$.

\item $t_{121} = t_{212} = 0$
\newline
$\left| \text{S}_1 \right\rangle = t_{111} \left| 111 \right\rangle + t_{112} \left| 112 \right\rangle + t_{221} \left| 221 \right\rangle + t_{222} \left| 222 \right\rangle,\, \bar{t}_{111} t_{112} + \bar{t}_{221} t_{222} = 0$.
\end{enumerate}

\refstepcounter{subsection}
\subsection*{\thesubsection \quad Row 2 = 0}

We have
\begin{align}
\bar{t}_{211} t_{221} + \bar{t}_{112} t_{122} & = 0, \label{Row2|1} \\
\bar{t}_{111} t_{211} + \bar{t}_{122} t_{222} & = 0, \label{Row2|2} \\
\bar{t}_{111} t_{112} + \bar{t}_{221} t_{222} & = 0. \label{Row2|3}
\end{align}

From equations (\ref{Row2|2}) and (\ref{Row2|3}), we have
\begin{align}
\frac{\bar{t}_{111}}{t_{222}} = - \frac{\bar{t}_{122}}{t_{211}} = - \frac{\bar{t}_{221}}{t_{112}},
\end{align}
but
\begin{align}
\frac{\bar{t}_{122}}{t_{211}} = - \frac{\bar{t}_{221}}{t_{112}}
\end{align}
from equation (\ref{Row2|1}). In order to resolve this contradiction, we consider the following possibilities:-
\begin{enumerate}
\item $t_{111} = t_{222} = 0$
\newline
$\left| \psi \right\rangle = t_{121} \left| 121 \right\rangle + t_{122} \left| 122 \right\rangle + t_{211} \left| 211 \right\rangle + t_{212} \left| 212 \right\rangle,\, \bar{t}_{211} t_{212} + \bar{t}_{121} t_{122} = 0$;
\newline
$\sigma_1^{(1)2} = \sigma_1^{(3)2} = \left| t_{121} \right|^2 + \left| t_{122} \right|^2,\, \sigma_1^{(2)2} = \sigma_2^{(1)2} = \left| t_{211} \right|^2 + \left| t_{212} \right|^2$.
\newline
From the ordering property of higher order singular value decomposition, since $\sigma_1^{(2)2}$ is the largest 2-mode singular value, the only way $\sigma_1^{(2)2} = \sigma_2^{(1)2}$ can be satisfied is when $\sigma_1^{(2)2} = \frac{1}{2}$, resulting to $\sigma_1^{(1)2} = \sigma_1^{(2)2} = \sigma_1^{(3)2} = \frac{1}{2}$. This is not a generic special state that we are looking for.

\item $t_{122} = t_{211} = 0$
\newline
$\left| \text{S}_3 \right\rangle = t_{111} \left| 111 \right\rangle + t_{122} \left| 122 \right\rangle + t_{211} \left| 211 \right\rangle + t_{222} \left| 222 \right\rangle,\, \bar{t}_{111} t_{211} + \bar{t}_{122} t_{222} = 0$.

\item $t_{122} = t_{211} = 0$
\newline
$\left| \text{S}_1 \right\rangle = t_{111} \left| 111 \right\rangle + t_{112} \left| 112 \right\rangle + t_{221} \left| 221 \right\rangle + t_{222} \left| 222 \right\rangle,\, \bar{t}_{111} t_{112} + \bar{t}_{221} t_{222} = 0$.
\end{enumerate}

\refstepcounter{subsection}
\subsection*{\thesubsection \quad Row 3 = 0}

We have
\begin{align}
\bar{t}_{211} t_{212} + \bar{t}_{121} t_{122} & = 0, \label{Row3|1} \\
\bar{t}_{111} t_{211} + \bar{t}_{122} t_{222} & = 0, \label{Row3|2} \\
\bar{t}_{111} t_{121} + \bar{t}_{212} t_{222} & = 0. \label{Row3|3}
\end{align}

From equations (\ref{Row3|2}) and (\ref{Row3|3}), we have
\begin{align}
\frac{\bar{t}_{111}}{t_{222}} = - \frac{\bar{t}_{122}}{t_{211}} = - \frac{\bar{t}_{212}}{t_{121}},
\end{align}
but
\begin{align}
\frac{\bar{t}_{122}}{t_{211}} = - \frac{\bar{t}_{212}}{t_{121}}
\end{align}
from equation (\ref{Row3|1}). In order to resolve this contradiction, we consider the following possibilities:-
\begin{enumerate}
\item $t_{111} = t_{222} = 0$
\newline
$\left| \psi \right\rangle = t_{112} \left| 112 \right\rangle + t_{122} \left| 122 \right\rangle + t_{211} \left| 211 \right\rangle + t_{221} \left| 221 \right\rangle,\, \bar{t}_{211} t_{221} + \bar{t}_{112} t_{122} = 0$;
\newline
$\sigma_1^{(1)2} = \sigma_1^{(2)2} = \left| t_{112} \right|^2 + \left| t_{122} \right|^2,\, \sigma_1^{(3)2} = \sigma_2^{(1)2} = \left| t_{211} \right|^2 + \left| t_{221} \right|^2$.
\newline
From the ordering property of higher order singular value decomposition, since $\sigma_1^{(3)2}$ is the largest 3-mode singular value, the only way $\sigma_1^{(3)2} = \sigma_2^{(1)2}$ can be satisfied is when $\sigma_1^{(3)2} = \frac{1}{2}$, resulting to $\sigma_1^{(1)2} = \sigma_1^{(2)2} = \sigma_1^{(3)2} = \frac{1}{2}$. This is not a generic special state that we are looking for.

\item $t_{122} = t_{211} = 0$
\newline
$\left| \text{S}_2 \right\rangle = t_{111} \left| 111 \right\rangle + t_{121} \left| 121 \right\rangle + t_{212} \left| 212 \right\rangle + t_{222} \left| 222 \right\rangle,\, \bar{t}_{111} t_{121} + \bar{t}_{212} t_{222} = 0$.

\item $t_{121} = t_{212} = 0$
\newline
$\left| \text{S}_3 \right\rangle = t_{111} \left| 111 \right\rangle + t_{122} \left| 122 \right\rangle + t_{211} \left| 211 \right\rangle + t_{222} \left| 222 \right\rangle,\, \bar{t}_{111} t_{211} + \bar{t}_{122} t_{222} = 0$.
\end{enumerate}

\end{document}